\documentclass[11pt]{article}

\usepackage{amsmath,amssymb,amsthm,mathtools}
\usepackage{bm}
\usepackage{booktabs}
\usepackage{enumitem}
\usepackage[margin=1.05in]{geometry}
\usepackage{microtype}
\usepackage{natbib}
\usepackage{setspace}
\usepackage{hyperref}

\hypersetup{
  colorlinks=true,
  linkcolor=blue,
  citecolor=blue,
  urlcolor=blue
}

\newtheorem{assumption}{Assumption}
\newtheorem{theorem}{Theorem}
\newtheorem{proposition}{Proposition}
\newtheorem{lemma}{Lemma}
\newtheorem{corollary}{Corollary}
\newtheorem{example}{Example}
\theoremstyle{remark}
\newtheorem{remark}{Remark}

\newcommand{\E}{\mathbb{E}}
\newcommand{\Pp}{\mathbb{P}}
\newcommand{\Var}{\operatorname{Var}}
\newcommand{\Cov}{\operatorname{Cov}}
\newcommand{\R}{\mathbb{R}}
\newcommand{\1}{\mathbf{1}}

\newcommand{\argmin}{\operatorname*{arg\,min}}

\newcommand{\toD}{\mathrel{\overset{d}{\longrightarrow}}}
\newcommand{\toP}{\mathrel{\overset{p}{\longrightarrow}}}

\title{Stationary Errors and Quantile Regression in Short Panels\footnote{We thank seminar participants at the University of Toronto and conference participants at  the 2025 Advances in Econometrics conference in Aarhus, Denmark,  2025 SEA meetings in Tampa, Florida, the 2026 AFES in Cairo, Egypt, the 2026 James G. MacKinnon 75th Birthday Conference in Aarhus, Denmark, the 2026 ISNPS in Thessaloniki, Greece,  for helpful comments. We are also grateful to Shengbin Wei for very helpful research assistance. }}
\author{Shakeeb Khan \\ Boston College \and Elie Tamer \\ Harvard University}
\date{August 2026}

\begin{document}
\maketitle

\begin{abstract}
This paper studies a linear panel model with an unrestricted individual effect and a time-stationary idiosyncratic disturbance. We first show that stationarity is a strong restriction in a quantile model. In a linear conditional quantile specification with quantile-dependent slopes, equality of the conditional residual distributions across periods generically forces the slope coefficient to be constant over the quantile index. Thus, a stationary-error model identifies a common location coefficient rather than a collection of quantile-specific slope effects. We then develop a fixed-$T$ estimator of this common coefficient. For each period, we run a cross-sectional quantile regression of the outcome on the full history of regressors. Stationarity makes the quantile projection of the composite individual effect and disturbance common across the period-specific regressions. Differences between diagonal and off-diagonal blocks of the resulting projection coefficients therefore identify the common slope whenever $T\geq 2$. We combine all such restrictions by a two-step minimum-distance estimator. The estimator is $\sqrt{n}$-consistent and asymptotically normal with fixed $T$, permits unrestricted dependence across periods within an individual, and does not estimate the individual effects. We provide a consistent analytic covariance estimator, a cluster bootstrap, and an overidentification test of the projection restrictions implied by stationarity. Extensive Monte Carlo experiments show adequate performance under various designs.
\end{abstract}

\noindent\textbf{Keywords:} panel data; quantile regression; stationarity; fixed effects; minimum distance; short panels.

\noindent\textbf{JEL classification:} C13, C14, C23.
\newpage 
\section{Introduction}
\label{sec:introduction}

Quantile regression is attractive in panel data because it permits observed covariates to have different associations with different parts of the conditional outcome distribution. The individual effect that motivates the use of panel data, however, creates a familiar difficulty. Unlike a conditional mean restriction, a conditional quantile restriction is not preserved by first differencing. Direct fixed-effects quantile regression therefore introduces one nuisance parameter per individual and is typically analyzed under asymptotic sequences in which the number of time periods grows. This is the setting of the penalized estimator in \citet{koenker2004}, \citet{lamarche2010}, the two-step transformation in \citet{canay2011} and \citet{besstregolovan2019}, and the large-$n$, large-$T$ theory developed by \citet{katoetal2012}, \citet{galvaokato2016}, and \citet{galvaoetal2020}.

This paper considers a different model and a different asymptotic experiment. For individuals $i=1,\ldots,n$ and periods $t=1,\ldots,T$, let
\begin{equation}
Y_{it}=X_{it}'\beta_0+A_i+V_{it},
\label{eq:location-model}
\end{equation}
and impose conditional time stationarity of the idiosyncratic disturbance,
\begin{equation}
V_{it}\mid (X_i,A_i)\ \overset{d}{=}\ V_{is}\mid (X_i,A_i),
\qquad s,t\in\{1,\ldots,T\},
\label{eq:stationarity}
\end{equation}
where $X_i=(X_{i1}',\ldots,X_{iT}')'$ is the full regressor history. The number of periods is fixed while $n$ tends to infinity. The distribution of $A_i$ is unrestricted and may be arbitrarily related to $X_i$. The disturbances may be serially dependent, and their common conditional distribution may depend on both $X_i$ and $A_i$. In particular, \eqref{eq:stationarity} is not a mean-independence or a median-independence condition. It is a cross-period equality restriction on conditional distributions.

We begin with the implication of this restriction for quantile heterogeneity. Consider the familiar linear conditional quantile representation
\[
Q_\tau(Y_{it}\mid X_i,A_i)
 =c(\tau)+X_{it}'\beta(\tau)+\rho(\tau)A_i.
\]
For a fixed $\tau$, define the corresponding quantile residual by subtracting the right-hand side. If the conditional distribution of this residual is required to be the same in two periods, then equality of all of its conditional quantiles implies
\[
(X_{it}-X_{is})'\{\beta(u)-\beta(\tau)\}=0
\qquad\text{for every }u\in(0,1).
\]
Under a standard rank condition on within-individual regressor changes, this equality forces $\beta(u)=\beta(\tau)$ for every $u$. Stationarity at even one quantile index therefore rules out quantile-varying slopes. Under a stronger no-hyperplane-mass condition, if $\beta(\cdot)$ is not constant, equality of the two residual distributions can occur only on a null set of regressor histories. This result is useful both substantively and methodologically. It clarifies that a model combining quantile-varying slopes with stationary quantile-specific residual distributions is generically internally inconsistent. It also clarifies the interpretation of $\beta_0$ in \eqref{eq:location-model}: it is a common location coefficient, not a quantile-specific effect.

The second contribution is a fixed-$T$ identification argument for $\beta_0$. Let $X_i=(X_{i1}',\ldots,X_{iT}')'$ and, for a chosen quantile index $q$, run a separate population quantile projection of $Y_{it}$ on $X_i$ for each period. Under \eqref{eq:location-model}--\eqref{eq:stationarity}, the projection coefficient in period $t$ has the form
\[
\pi_{0t}(q)=\delta_0(q)+E_t\beta_0,
\]
where $E_t=e_t\otimes I$ places $\beta_0$ in the block corresponding to $X_{it}$ and $\delta_0(q)$ is common across periods. The common nuisance coefficient is the quantile projection of $A_i+V_{it}$ on the full regressor history. It need not equal a conditional quantile coefficient and it is allowed to vary with $q$. For every ordered pair $s\neq t$, the contrast between the coefficient on $X_{is}$ in the period-$s$ projection and the coefficient on the same regressor block in the period-$t$ projection equals $\beta_0$. Thus $T\geq 2$ is sufficient for point identification.

The third contribution is estimation and inference. The first step consists of $T$ ordinary cross-sectional quantile regressions, each using the $n$ individuals. The second step pools all $T(T-1)$ slope contrasts by minimum distance. This construction avoids estimating the $n$ individual effects and therefore has standard $\sqrt{n}$ behavior with fixed $T$. We derive the joint influence function of the first-step quantile projections under possible misspecification of the conditional quantile function. This point matters because the first-step coefficients are best linear quantile projections; they need not describe the true conditional quantile given $X_i$. The appropriate covariance matrix is consequently a robust quantile-regression sandwich with cross-period score covariances. The middle component of this sandwich is formed from products of quantile scores, not indicators that a fitted residual is exactly zero.

The minimum-distance formulation also produces specification restrictions. There are $T(T-1)$ vector-valued estimates of the same $\beta_0$. The optimally weighted distance between these estimates and their common fitted value has an asymptotic chi-square distribution. In particular, the full system is overidentified already when $T=2$: the two diagonal-minus-off-diagonal contrasts must agree. For $T>2$, stationarity additionally requires equality of the off-diagonal projection coefficients within each regressor block.

Our analysis is related to several parts of the panel quantile and nonlinear panel literature. \citet{rosen2012} studies set identification in short panels when conditional quantile restrictions are combined with weak restrictions on cross-period dependence. \citet{chernozhukovetal2013} use time homogeneity to identify or bound average and quantile effects in nonseparable panel models, while \citet{chernozhukovetal2015} identify structural quantile derivatives for stayers under time homogeneity. \citet{arellanobonhomme2016} develop simulation-based quantile methods for nonlinear short panels, and \citet{grahametal2018} study quantile correlated random coefficients. More directly related to our maintained restriction, \citet{chenwang2018} exploit conditional stationarity in short nonlinear panels and construct minimum-distance estimators by matching conditional outcome distributions. \citet{botosarumuris2025} use conditional time stationarity to sharpen identification of counterfactual distributions in a broader class of nonlinear panel models. The present paper instead uses stationarity in the additive location model to obtain linear restrictions on period-specific cross-sectional quantile projections. These restrictions deliver a closed-form second step and, separately, reveal the incompatibility between stationary quantile-specific residual distributions and quantile-varying slopes. The use of stationarity is also in the spirit of \citet{manski1987}, although the observable implications and the estimator are different.

The projection-and-minimum-distance structure has a classical antecedent in the multivariate linear-predictor analysis of \citet{chamberlain1982}. It is also related to the short-panel identification perspective in \citet{komarovaseverinitamer}. There is also a literature on minimum-distance panel quantile estimation. \citet{galvaowang2015} combine unit-specific time-series quantile regressions and establish results under large-$T$ asymptotics. The recent procedure of \citet{mellypons2026} likewise begins with within-unit quantile regressions and lets the number of observations per unit grow. In contrast, our first-step regressions are period-specific cross-sectional regressions. Each is estimated from $n$ independent individuals, and $T$ remains fixed. We are not aware of an existing procedure that uses equality of these period-specific cross-sectional quantile projections to obtain fixed-$T$ point estimation and root-$n$ inference for the common location slope. The quantile-projection interpretation follows the analysis of potentially misspecified quantile regression in \citet{angristetal2006}.

The remainder of the paper is organized as follows. Section \ref{sec:incompatibility} establishes the incompatibility between residual stationarity and quantile-varying slopes. Section \ref{sec:identification} derives identification from period-specific quantile projections. Section \ref{sec:estimation} introduces the two-step minimum-distance estimator. Section \ref{sec:asymptotics} develops limit theory, feasible variance estimation, bootstrap inference, and specification testing. Section \ref{sec:multiple-q} shows how to combine several quantile projections. Section \ref{mc:sec} explores finite sample properties via a  Monte Carlo analysis, and Section \ref{sec:conclusion} concludes. Proofs are collected in the Appendix.

\section{Stationarity and quantile heterogeneity}
\label{sec:incompatibility}

This section isolates the restriction that stationarity places on a conventional linear panel quantile specification. It is logically prior to the estimation argument: before using stationarity, one must determine which quantile objects can vary under that assumption.

\subsection{A linear conditional quantile specification}

Let $\mathcal{T}\subset(0,1)$ be an interval. Suppose that, for every $\tau\in\mathcal{T}$ and every period $t$,
\begin{equation}
Q_\tau(Y_{it}\mid X_i,A_i)
 =c(\tau)+X_{it}'\beta(\tau)+\rho(\tau)A_i.
\label{eq:linear-quantile-model}
\end{equation}
Here $Q_\tau$ denotes the generalized inverse conditional quantile, and $c(\tau)$ is an unrestricted quantile-specific intercept common across periods. We assume that \eqref{eq:linear-quantile-model} is a valid version of the conditional quantile process on a common probability-one set. Define the $\tau$-specific residual
\begin{equation}
\varepsilon_{it}(\tau)
 =Y_{it}-c(\tau)-X_{it}'\beta(\tau)-\rho(\tau)A_i.
\label{eq:tau-residual}
\end{equation}
By construction, $Q_\tau(\varepsilon_{it}(\tau)\mid X_i,A_i)=0$. The relevant stationarity requirement is stronger than this zero-quantile normalization.

\begin{assumption}[Residual stationarity at a quantile]
\label{ass:residual-stationarity}
For a pair $t\neq s$ and an index $\tau\in\mathcal{T}$,
\begin{equation}
\varepsilon_{it}(\tau)\mid(X_i,A_i)
 \overset{d}{=}
\varepsilon_{is}(\tau)\mid(X_i,A_i)
\qquad\text{a.s.}
\label{eq:residual-stationarity}
\end{equation}
\end{assumption}

For later use, let $\mathcal S_{ts}(\tau)$ denote the set of conditioning values $(x,a)$ at which the two conditional distributions in \eqref{eq:residual-stationarity} coincide. Assumption \ref{ass:residual-stationarity} is equivalently
$\Pp\{(X_i,A_i)\in\mathcal S_{ts}(\tau)\}=1$.

Let $\Delta X_{its}=X_{it}-X_{is}$. The following support condition is sufficient for the slope difference in any direction to be detected by within-individual changes.

\begin{assumption}[Within-individual rank]
\label{ass:within-rank}
For at least one pair $t\neq s$,
\begin{equation}
\E[\Delta X_{its}\Delta X_{its}']
\quad\text{is positive definite.}
\label{eq:within-rank}
\end{equation}
\end{assumption}

\begin{theorem}[Stationarity rules out quantile-varying slopes]
\label{thm:incompatibility}
Suppose \eqref{eq:linear-quantile-model} holds. Fix $t\neq s$ and $\tau,u\in\mathcal{T}$.

\begin{enumerate}[label=(\roman*)]
\item On every conditioning value $(X_i,A_i)\in\mathcal S_{ts}(\tau)$,
\begin{equation}
\Delta X_{its}'\{\beta(u)-\beta(\tau)\}=0.
\label{eq:hyperplane-implication}
\end{equation}

\item If Assumption \ref{ass:residual-stationarity} holds at one index $\tau\in\mathcal{T}$ and Assumption \ref{ass:within-rank} holds for the same pair, then
\begin{equation}
\beta(u)=\beta(\tau)
\qquad\text{for every }u\in\mathcal{T}.
\label{eq:constant-beta}
\end{equation}
Consequently, residual stationarity at a single quantile forces the slope function $\beta(\cdot)$  to be constant on $\mathcal{T}$.
\end{enumerate}
\end{theorem}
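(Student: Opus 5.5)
The plan is to compare the conditional quantiles of the two residuals directly. Fix a conditioning value $(x,a)$ with $x=(x_1',\ldots,x_T')'$, and write the time-$t$ residual as
\[
\varepsilon_{it}(\tau)=Y_{it}-c(\tau)-x_t'\beta(\tau)-\rho(\tau)a .
\]
Conditional on $(X_i,A_i)=(x,a)$, this residual is $Y_{it}$ shifted by a constant. Because quantiles are equivariant under location shifts, and \eqref{eq:linear-quantile-model} is valid on a common probability-one set, for every $u\in\mathcal T$ we get
\[
Q_u(\varepsilon_{it}(\tau)\mid x,a)=c(u)-c(\tau)+x_t'\{\beta(u)-\beta(\tau)\}+\{\rho(u)-\rho(\tau)\}a.
\]
The same formula holds for period $s$, with $x_t$ replaced by $x_s$.

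For part (i), suppose $(x,a)\in\mathcal S_{ts}(\tau)$. Then the two conditional distributions coincide, so all their generalized-inverse quantiles coincide; in particular they agree at every $u\in\mathcal T$. Subtracting the two displayed quantile formulas, the intercept terms $c(u)-c(\tau)$ cancel because $c$ is common across periods. The $a$-terms cancel because $\rho$ is common across periods. What remains is $(x_t-x_s)'\{\beta(u)-\beta(\tau)\}=0$, which is \eqref{eq:hyperplane-implication}.

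One technical point needs care. The representation \eqref{eq:linear-quantile-model} holds only almost surely, yet part (i) is stated at conditioning values. I would therefore fix a version of the conditional quantile process and interpret $\mathcal S_{ts}(\tau)$ as a subset of the probability-one set on which that version holds for all $u\in\mathcal T$ simultaneously. I expect this to be the main obstacle. It is measure-theoretic bookkeeping about versions over an uncountable index set. The paper's maintained assumption that the version holds on a common probability-one set is exactly what removes the difficulty, so I will invoke that assumption explicitly.

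For part (ii), Assumption \ref{ass:residual-stationarity} gives $\Pp\{(X_i,A_i)\in\mathcal S_{ts}(\tau)\}=1$. Intersecting with the common probability-one set and applying part (i), fix $u\in\mathcal T$ and set $d=\beta(u)-\beta(\tau)$. Then $\Delta X_{its}'d=0$ almost surely, so
\[
d'\E[\Delta X_{its}\Delta X_{its}']d=\E[(\Delta X_{its}'d)^2]=0 .
\]
Positive definiteness from Assumption \ref{ass:within-rank} (for the same pair) forces $d=0$. Since $u\in\mathcal T$ was arbitrary, $\beta(\cdot)\equiv\beta(\tau)$ on $\mathcal T$. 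Note that the almost-sure statement is needed for each $u$ separately, which is harmless; no uniformity over $u$ beyond the common version is required.
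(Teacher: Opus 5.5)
Your proposal is correct and follows essentially the same route as the paper's proof. Both use location equivariance of conditional quantiles to compute $Q_u$ of each residual, observe that the $c$ and $\rho$ terms cancel on $\mathcal S_{ts}(\tau)$ to give (i), and then apply positive definiteness of $\E[\Delta X_{its}\Delta X_{its}']$ to the almost-sure hyperplane equality, separately for each $u$, to conclude (ii). Your explicit handling of versions via the common probability-one set is a reasonable tightening of a step the paper leaves implicit.
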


The rank condition in Theorem \ref{thm:incompatibility} is weaker than requiring that every hyperplane receive zero probability. The stronger condition gives a sharper generic-failure statement.

\begin{corollary}[Generic failure under no hyperplane mass]
\label{cor:generic-failure}
Suppose that for a pair $t\neq s$,
\begin{equation}
\Pp(\Delta X_{its}'\gamma=0)=0
\qquad\text{for every }\gamma\neq 0.
\label{eq:no-hyperplane}
\end{equation}
If $\beta(\cdot)$ is not constant on $\mathcal{T}$, then, for every fixed $\tau\in\mathcal{T}$,
\begin{equation}
\Pp\{(X_i,A_i)\in\mathcal S_{ts}(\tau)\}=0.
\label{eq:generic-nonstationarity}
\end{equation}
\end{corollary}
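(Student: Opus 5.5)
The corollary follows from part (i) of Theorem \ref{thm:incompatibility}. Since $\beta(\cdot)$ is not constant on $\mathcal{T}$, fix $\tau\in\mathcal{T}$. There is then some $u\in\mathcal{T}$ with $\beta(u)\neq\beta(\tau)$: otherwise $\beta(u)=\beta(\tau)$ for all $u$ and $\beta(\cdot)$ would be constant. Set $\gamma=\beta(u)-\beta(\tau)\neq 0$.

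Part (i) of Theorem \ref{thm:incompatibility} says that at every conditioning value $(X_i,A_i)\in\mathcal S_{ts}(\tau)$ we have $\Delta X_{its}'\gamma=0$. Since $\Delta X_{its}$ is a measurable function of $X_i$, this gives the event inclusion
\[
\{(X_i,A_i)\in\mathcal S_{ts}(\tau)\}\subseteq\{\Delta X_{its}'\gamma=0\},
\]
possibly up to a null set on which the quantile representation \eqref{eq:linear-quantile-model} fails. Monotonicity of probability and \eqref{eq:no-hyperplane} applied to this fixed $\gamma\neq 0$ then give
\[
\Pp\{(X_i,A_i)\in\mathcal S_{ts}(\tau)\}\le\Pp(\Delta X_{its}'\gamma=0)=0,
\]
which is \eqref{eq:generic-nonstationarity}. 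Because $\tau$ was arbitrary, the conclusion holds for every $\tau\in\mathcal{T}$.

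The only delicate point is measure-theoretic. The set $\mathcal S_{ts}(\tau)$ is defined through regular conditional distributions, so it is determined only up to a null set, and the theorem's implication holds on the common probability-one set where \eqref{eq:linear-quantile-model} is valid. I would handle this by intersecting with that probability-one set: the event $\{(X_i,A_i)\in\mathcal S_{ts}(\tau)\}$ differs from a subset of $\{\Delta X_{its}'\gamma=0\}$ by a null set, which does not affect the bound.

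Measurability of $\mathcal S_{ts}(\tau)$ itself is implicit in the paper's statement of Assumption \ref{ass:residual-stationarity}. Even if it failed, the argument would give zero outer probability, so the conclusion still holds.
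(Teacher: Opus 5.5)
Your proof is correct and follows the paper's argument exactly: pick $u$ with $\gamma=\beta(u)-\beta(\tau)\neq 0$, use part (i) of Theorem~\ref{thm:incompatibility} to place the stationarity event inside the hyperplane event $\{\Delta X_{its}'\gamma=0\}$, and apply \eqref{eq:no-hyperplane}. Your extra remarks on null sets and the measurability of $\mathcal S_{ts}(\tau)$ are a harmless refinement that the paper leaves implicit.
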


\begin{example}[Scalar regressor and uniform conditional distributions]
\label{ex:uniform}
Let $T=2$, let $X_{it}$ be scalar and positive, set $\rho(\tau)=1$, and suppose
\[
\beta(\tau)=a+b\tau,
\qquad b>0.
\]
For simplicity take $A_i=0$ and let the conditional quantile function be
\[
Q_\tau(Y_{it}\mid X_i)=X_{it}(a+b\tau).
\]
For a fixed $\tau$, the residual in \eqref{eq:tau-residual} has conditional quantile function
\[
Q_u(\varepsilon_{it}(\tau)\mid X_i)
 =X_{it}b(u-\tau).
\]
Thus $\varepsilon_{it}(\tau)\mid X_i$ is uniform on
\[
[-X_{it}b\tau,\;X_{it}b(1-\tau)].
\]
Whenever $X_{i1}\neq X_{i2}$, the two residual supports differ. The residual distributions therefore cannot be stationary across the two periods.
\end{example}

\begin{remark}[The role of the individual effect]
The terms $c(u)-c(\tau)$ and $[\rho(u)-\rho(\tau)]A_i$ cancel when the conditional quantiles of the two residuals are compared. Allowing either a quantile-specific common intercept or the scale of the individual effect to vary with the quantile index does not relax the restriction on $\beta(\cdot)$. The result applies both to the location-shift model $\rho(\tau)\equiv 1$ and to the more general specification in \eqref{eq:linear-quantile-model}.
\end{remark}

\begin{remark}[Intercepts and time-invariant regressors]
The rank condition concerns directions with within-individual regressor variation. It cannot hold for a deterministic intercept or for a regressor that is constant over time. Stationarity therefore does not force the common quantile intercept $c(\tau)$ to be constant; that term may trace the quantiles of a stationary location disturbance. More generally, Theorem \ref{thm:incompatibility} forces constancy only for linear combinations of $\beta(\tau)$ that are identified by the support of $X_{it}-X_{is}$. In the estimation sections below, deterministic intercepts and time-invariant regressors are absorbed by the common nuisance projection and are excluded from the structural slope vector.
\end{remark}

\begin{remark}[What the theorem does and does not say]
Theorem \ref{thm:incompatibility} does not rule out panel quantile models with quantile-varying slopes. It rules out combining such slopes with equality of the full conditional distributions of the quantile-specific residuals. Standard fixed-effects quantile regression imposes a zero conditional quantile restriction and need not impose \eqref{eq:residual-stationarity}. Conversely, a stationary-error model such as \eqref{eq:location-model}--\eqref{eq:stationarity} permits the common conditional distribution of the disturbance to be heterogeneous in $(X_i,A_i)$, but it does not permit a separate slope coefficient at each quantile.
\end{remark}

\begin{remark}[Differencing]
Quantile restrictions are not generally preserved by differencing. If one attempts to remove $A_i$ by differencing and then invokes symmetry or another distributional property derived from equality of the two residual distributions, Theorem \ref{thm:incompatibility} shows that the resulting argument is incompatible with nonconstant $\beta(\tau)$ under ordinary regressor variation. This is one reason to distinguish carefully between a quantile restriction and a stationarity restriction on an entire distribution.
\end{remark}

\section{Identification from stationary quantile projections}
\label{sec:identification}

We now return to the stationary location model \eqref{eq:location-model}--\eqref{eq:stationarity}. The parameter of interest is the common slope $\beta_0\in\R^p$. The individual effect $A_i$ is unrestricted and is not estimated.

\subsection{Notation and assumptions}

Let
\begin{equation}
X_i=(X_{i1}',\ldots,X_{iT}')'\in\R^{pT},
\label{eq:stacked-regressor}
\end{equation}
and define the selector
\begin{equation}
E_t=e_t\otimes I_p\in\R^{pT\times p},
\qquad X_{it}=E_t'X_i.
\label{eq:selector}
\end{equation}
The regressors should be represented without duplicated deterministic columns. In particular, a common intercept or a time-invariant regressor that is absorbed by $A_i$ should not be repeated in every block of $X_i$.

For $q\in(0,1)$, define the check function and its score by
\begin{equation}
\rho_q(u)=u\{q-\1(u<0)\},
\qquad
\psi_q(u)=q-\1(u\leq 0).
\label{eq:check-score}
\end{equation}
The median case in the original construction corresponds to $q=1/2$ and ordinary LAD. We state the results for a generic $q$ because every quantile projection identifies the same $\beta_0$ under stationarity.

\begin{assumption}[Sampling and moments]
\label{ass:sampling}
The vectors $(Y_{i1},\ldots,Y_{iT},X_i,A_i,V_{i1},\ldots,V_{iT})$ are independent and identically distributed across $i$. The integers $T$ and $p$ are fixed. For the chosen $q$,
\[
\E\|X_i\|<\infty,
\qquad
\E|Y_{it}|<\infty,
\quad t=1,\ldots,T.
\]
\end{assumption}

\begin{assumption}[Conditional stationarity]
\label{ass:stationarity}
Equation \eqref{eq:stationarity} holds for every pair $s,t$.
\end{assumption}

Define the composite disturbance $R_{it}=A_i+V_{it}$. Conditional stationarity of $V_{it}$ given $(X_i,A_i)$ implies stationarity of $R_{it}$ given $X_i$ after integrating out $A_i$.

\begin{lemma}[Stationarity of the composite disturbance]
\label{lem:composite-stationarity}
Under Assumption \ref{ass:stationarity},
\begin{equation}
R_{it}\mid X_i\ \overset{d}{=}\ R_{is}\mid X_i
\qquad\text{for all }s,t.
\label{eq:composite-stationarity}
\end{equation}
\end{lemma}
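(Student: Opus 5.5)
The argument works directly with conditional distribution functions and integrates out $A_i$ by the tower property. Fix $s,t$ and $r\in\R$, and write the conditional CDF of $R_{it}$ given $X_i$ as
\[
\Pp(R_{it}\le r\mid X_i)=\E\bigl[\Pp(V_{it}\le r-A_i\mid X_i,A_i)\bigm| X_i\bigr]\quad\text{a.s.}
\]
This uses only that $R_{it}=A_i+V_{it}$ and that $\sigma(X_i)\subset\sigma(X_i,A_i)$.

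Next define the conditional distribution function $G_t(v\mid x,a)=\Pp(V_{it}\le v\mid X_i=x,A_i=a)$. We need the inner term to equal $G_t(r-A_i\mid X_i,A_i)$ evaluated at the random point. This is the only delicate step, because the threshold $r-A_i$ depends on the conditioning variable $A_i$. It is handled by taking a regular conditional distribution of $V_{it}$ given $(X_i,A_i)$, which exists because $V_{it}$ is real-valued. The regular-version identity gives
\[
\Pp(V_{it}\le r-A_i\mid X_i,A_i)=G_t(r-A_i\mid X_i,A_i)\quad\text{a.s.}
\]
An equivalent route is to apply the standard fact $\E[f(W,Z)\mid Z]=\int f(w,Z)\,\mu(dw\mid Z)$ to $f(v,x,a)=\1(v\le r-a)$, which is jointly measurable.

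Assumption \ref{ass:stationarity} states that the regular conditional laws of $V_{it}$ and $V_{is}$ given $(X_i,A_i)$ coincide almost surely. Hence $G_t(\cdot\mid X_i,A_i)=G_s(\cdot\mid X_i,A_i)$ on a probability-one event, simultaneously for all arguments, since both are CDFs and so it suffices to check equality on the rationals. Substituting into the display above gives
\[
\Pp(R_{it}\le r\mid X_i)=\E[G_s(r-A_i\mid X_i,A_i)\mid X_i]=\Pp(R_{is}\le r\mid X_i)\quad\text{a.s.}
\]
To conclude, first take $r$ in the rationals, where a countable union of null sets is still null. Then extend to all $r\in\R$ by right-continuity of conditional CDFs. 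This gives equality of the regular conditional distributions of $R_{it}$ and $R_{is}$ given $X_i$, which is \eqref{eq:composite-stationarity}.
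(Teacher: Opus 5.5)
Your proposal is correct and follows the paper's own argument: write the conditional CDF of $R_{it}$ given $X_i$ as the conditional CDF of $V_{it}$ given $(X_i,A_i)$, evaluated at $r-A_i$ and integrated against the conditional law of $A_i$ given $X_i$, and then use Assumption \ref{ass:stationarity} to make the integrand period-invariant. The paper does this in one informal line with $dF_{A\mid X}(a\mid x)$; your additional care (regular conditional distributions to handle the random threshold, and the rationals-plus-right-continuity step) makes that line rigorous but does not change the route.
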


For each $t$, define the population linear quantile projection
\begin{equation}
\pi_{0t}(q)
 =\argmin_{\pi\in\R^{pT}}
\E\big[\rho_q(Y_{it}-X_i'\pi)\big].
\label{eq:population-first-stage}
\end{equation}
Also define
\begin{equation}
\delta_0(q)
 =\argmin_{\delta\in\R^{pT}}
\E\big[\rho_q(R_{i1}-X_i'\delta)\big].
\label{eq:delta-projection}
\end{equation}
By Lemma \ref{lem:composite-stationarity}, the objective defining $\delta_0(q)$ is the same in every period.

\begin{assumption}[Unique quantile projections]
\label{ass:unique-projection}
The minimizer in \eqref{eq:delta-projection} is unique. The matrix $\E[X_iX_i']$ is finite and positive definite.
\end{assumption}

The next proposition is the central population restriction.

\begin{proposition}[Common-nuisance representation]
\label{prop:projection-representation}
Under Assumptions \ref{ass:sampling}--\ref{ass:unique-projection}, for every $t$,
\begin{equation}
\pi_{0t}(q)=\delta_0(q)+E_t\beta_0.
\label{eq:projection-representation}
\end{equation}
Equivalently, if $\pi_{0t}^{(s)}(q)=E_s'\pi_{0t}(q)$ denotes the coefficient on $X_{is}$ in the period-$t$ projection, then
\begin{equation}
\pi_{0t}^{(s)}(q)
 =\delta_0^{(s)}(q)+\1(t=s)\beta_0.
\label{eq:block-representation}
\end{equation}
\end{proposition}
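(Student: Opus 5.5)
The plan is a change of variables in the population objective \eqref{eq:population-first-stage}, followed by Lemma \ref{lem:composite-stationarity} to identify the resulting objective with the one defining $\delta_0(q)$.

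Since $X_{it}=E_t'X_i$, model \eqref{eq:location-model} gives
\[
Y_{it}-X_i'\pi=R_{it}-X_i'(\pi-E_t\beta_0).
\]
Reparametrize by $\delta=\pi-E_t\beta_0$. This is a bijection of $\R^{pT}$ onto itself, so
\[
\pi_{0t}(q)=E_t\beta_0+\argmin_{\delta}\E[\rho_q(R_{it}-X_i'\delta)],
\]
provided the latter argmin is a singleton.

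Next, show that $\E[\rho_q(R_{it}-X_i'\delta)]$ does not depend on $t$ for any fixed $\delta$. First check finiteness. Because $|\rho_q(u)|\le|u|$, the objective is finite whenever $\E|R_{it}|<\infty$ and $\E\|X_i\|<\infty$. From $|R_{it}|\le |Y_{it}|+\|X_{it}\|\|\beta_0\|$ and Assumption \ref{ass:sampling}, both integrability conditions hold.

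Then condition on $X_i$:
\[
\E[\rho_q(R_{it}-X_i'\delta)\mid X_i]=\int\rho_q(r-X_i'\delta)\,dF_{R_{it}\mid X_i}(r).
\]
By Lemma \ref{lem:composite-stationarity}, the conditional law $F_{R_{it}\mid X_i}$ equals $F_{R_{i1}\mid X_i}$ almost surely, so the conditional expectations coincide almost surely. Taking outer expectations (iterated expectations are justified by the integrability just shown) gives
\[
\E[\rho_q(R_{it}-X_i'\delta)]=\E[\rho_q(R_{i1}-X_i'\delta)]\qquad\text{for every }\delta.
\]
The two objective functions are therefore identical as functions of $\delta$, and so they have the same set of minimizers. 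By Assumption \ref{ass:unique-projection} that set is the singleton $\{\delta_0(q)\}$. This proves \eqref{eq:projection-representation}, and as a by-product shows that $\pi_{0t}(q)$ is well defined and unique.

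The block form \eqref{eq:block-representation} then follows by premultiplying \eqref{eq:projection-representation} by $E_s'$ and using $E_s'E_t=\1(s=t)I_p$.

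The main obstacle is to make the conditioning step rigorous. Lemma \ref{lem:composite-stationarity} is stated as an equality of conditional distributions. I will use it in the form of regular conditional distributions agreeing $P_{X}$-almost surely, so that the integral of the measurable function $r\mapsto\rho_q(r-x'\delta)$ agrees for almost every $x$.

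Measurability and integrability are routine given the $L^1$ bounds. Uniqueness of $\pi_{0t}(q)$ is not needed as a separate assumption, since it is inherited from uniqueness of $\delta_0(q)$ through the bijection.
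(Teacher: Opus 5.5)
Your proposal is correct and follows the paper's own proof: reparametrize by $\delta=\pi-E_t\beta_0$, use Lemma \ref{lem:composite-stationarity} to show the period-$t$ check-loss objective coincides with the one defining $\delta_0(q)$, invoke uniqueness, and premultiply by $E_s'$. The integrability bound and the regular-conditional-distribution justification you add are extra rigor that the paper leaves implicit.
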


The representation immediately yields a collection of observable slope contrasts. For every ordered pair $(s,t)$ with $s\neq t$, define
\begin{equation}
b_{0,st}(q)
 =E_s'\{\pi_{0s}(q)-\pi_{0t}(q)\}.
\label{eq:population-contrast}
\end{equation}
Then
\begin{equation}
b_{0,st}(q)=\beta_0,
\qquad s\neq t.
\label{eq:all-contrasts-beta}
\end{equation}
There are $T(T-1)$ such $p$-vectors.

\begin{theorem}[Fixed-$T$ point identification]
\label{thm:identification}
Suppose Assumptions \ref{ass:sampling}--\ref{ass:unique-projection} hold.

\begin{enumerate}[label=(\roman*)]
\item If $T\geq 2$, $\beta_0$ is point identified by any one of the contrasts in \eqref{eq:population-contrast}. The remaining contrasts are overidentifying restrictions.

\item If $T=1$, the projection identifies only $\delta_0(q)+\beta_0$; the two components are not separately identified by this argument.
\end{enumerate}
\end{theorem}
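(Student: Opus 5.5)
The proof follows directly from Proposition \ref{prop:projection-representation}, taking that representation as given. The argument has two parts: show that the projection coefficients are functionals of the observable distribution, and then read $\beta_0$ off the block structure.

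For part (i), fix $T\geq 2$ and an ordered pair $s\neq t$. Each $\pi_{0t}(q)$ in \eqref{eq:population-first-stage} is a functional of the joint law of $(Y_{it},X_i)$ alone, and this law is observable. Its well-definedness comes from the moment conditions in Assumption \ref{ass:sampling}, which make the objective finite. Uniqueness comes from Proposition \ref{prop:projection-representation} itself. The substitution $\pi=\delta+E_t\beta_0$ maps the period-$t$ objective onto $\E[\rho_q(R_{it}-X_i'\delta)]$. By Lemma \ref{lem:composite-stationarity}, the latter equals the objective in \eqref{eq:delta-projection}, so uniqueness of $\delta_0(q)$ under Assumption \ref{ass:unique-projection} transfers to uniqueness of $\pi_{0t}(q)$. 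Consequently $b_{0,st}(q)$ in \eqref{eq:population-contrast} is determined by the distribution of observables.

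Applying $E_s'$ to the difference of \eqref{eq:projection-representation} at periods $s$ and $t$ gives
\[
E_s'\{\pi_{0s}(q)-\pi_{0t}(q)\}
=E_s'E_s\beta_0-E_s'E_t\beta_0
=\beta_0 .
\]
This holds because $E_s'E_s=I_p$ and $E_s'E_t=0$ for $s\neq t$. It is exactly \eqref{eq:all-contrasts-beta}, and the common nuisance $\delta_0(q)$ cancels. Any two observationally equivalent structures satisfying the assumptions therefore share the same $\beta_0$, which is point identification. Every other contrast must then equal the same vector. This yields the $T(T-1)-1$ additional vector restrictions on observables, which I will call overidentifying. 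These restrictions need not hold for an arbitrary distribution of $(Y_i,X_i)$. For example, with $T=2$ equality of $b_{0,12}$ and $b_{0,21}$ is a testable constraint. I will note this in one sentence.

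For part (ii), with $T=1$ the only projection is $\pi_{01}(q)$, and Proposition \ref{prop:projection-representation} gives $\pi_{01}(q)=\delta_0(q)+\beta_0$ (here $E_1=I_p$). To show the argument does not separate the two components, I will exhibit an alternative structure. Take $\tilde\beta=\beta_0+c$ for any $c\neq 0$, $\tilde A_i=A_i-X_{i1}'c$, and $\tilde V_{i1}=V_{i1}$. This structure generates the same $Y_{i1}$. Stationarity is vacuous with one period, and $A_i$ is unrestricted, so it satisfies the assumptions. Its nuisance projection is $\tilde\delta_0(q)=\delta_0(q)-c$ by equivariance of quantile projections under linear shifts in $X_i$. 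The observable projection is therefore unchanged while $\beta$ differs.

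The step needing most care is the uniqueness and equivariance bookkeeping, namely that $\pi_{0t}(q)$ is the unique minimizer. It reduces to the change of variables above together with Lemma \ref{lem:composite-stationarity}. I expect no real obstacle beyond that.
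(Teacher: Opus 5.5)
Your proof is correct, and for part (i) it is the paper's argument. You apply $E_s'$ to the difference of $\pi_{0t}(q)=\delta_0(q)+E_t\beta_0$ at periods $s$ and $t$, so $\delta_0(q)$ cancels and $E_s'E_s=I_p$, $E_s'E_t=0$ leave $\beta_0$. You also make explicit something the paper leaves inside Proposition~\ref{prop:projection-representation}: each $\pi_{0t}(q)$ is a well-defined, unique functional of the observable law. This follows from finiteness of the check-loss objective and from uniqueness carried over from $\delta_0(q)$ via the substitution $\pi=\delta+E_t\beta_0$ and Lemma~\ref{lem:composite-stationarity}. That step is what turns the algebraic identity into an identification statement.

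The genuine difference is in part (ii). The paper only observes that for $T=1$ the representation collapses to $\pi_{01}(q)=\delta_0(q)+\beta_0$, which is why the theorem says ``not separately identified by this argument.'' You instead build an observationally equivalent structure: $\tilde\beta=\beta_0+c$, $\tilde A_i=A_i-X_{i1}'c$, $\tilde V_{i1}=V_{i1}$. It generates the same $Y_{i1}$ and satisfies every maintained assumption: stationarity is vacuous, the moment and rank conditions are unchanged, and $\tilde\delta_0(q)=\delta_0(q)-c$ is still the unique minimizer by regression equivariance of the check-loss projection. It also leaves $\pi_{01}(q)$ unchanged. Your construction therefore proves a stronger statement than the paper's: when $T=1$, $\beta_0$ is not identified at all from the observable distribution, not merely out of reach of this particular contrast argument.
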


\begin{remark}[The $T=2$ case]
When $T=2$,
\[
\pi_{01}(q)
 =\begin{pmatrix}\delta_0^{(1)}(q)+\beta_0\\ \delta_0^{(2)}(q)\end{pmatrix},
\qquad
\pi_{02}(q)
 =\begin{pmatrix}\delta_0^{(1)}(q)\\ \delta_0^{(2)}(q)+\beta_0\end{pmatrix}.
\]
Hence
\[
\pi_{01}^{(1)}(q)-\pi_{02}^{(1)}(q)=\beta_0,
\qquad
\pi_{02}^{(2)}(q)-\pi_{01}^{(2)}(q)=\beta_0.
\]
One contrast is sufficient for identification, but the full model is not exactly identified. It imposes the additional restriction that these two estimates agree. This gives $p$ overidentifying restrictions when $T=2$.
\end{remark}

\begin{remark}[A weaker identifying restriction]
Full conditional stationarity is sufficient but not necessary for Proposition \ref{prop:projection-representation}. The estimator requires only that the population $q$-quantile projection of $R_{it}$ on $X_i$ be the same for every $t$. We refer to this weaker condition as \emph{quantile-projection stationarity}. The stronger distributional condition in Assumption \ref{ass:stationarity} is attractive because it is invariant to the chosen quantile index and supplies additional testable implications.
\end{remark}

\begin{remark}[No conditional quantile specification is required]
The linear projection in \eqref{eq:population-first-stage} need not equal the true conditional quantile of $Y_{it}$ given $X_i$. The term $\delta_0(q)'X_i$ is the best linear predictor of $A_i+V_{it}$ under check loss; it is not, in general, the projection of the conditional quantile function under a squared-error metric. This distinction is important for robust covariance estimation below.
\end{remark}

\begin{remark}[Mean-projection analogue and the role of quantiles]
If second moments exist, the same population algebra applies to period-specific least-squares projections. In that case stationarity implies a common linear projection of $A_i+V_{it}$ on $X_i$, and diagonal-minus-off-diagonal OLS coefficients also identify $\beta_0$. The quantile-projection construction is therefore not a device for estimating a heterogeneous slope $\beta(\tau)$; Theorem \ref{thm:incompatibility} explains why such an interpretation is unavailable under residual stationarity. Its advantages are that it remains well defined without second moments of the outcome, permits robust location summaries at any $q$, and allows the stationarity restrictions to be compared across several quantile projections. The minimum-distance formulation is the check-loss analogue of imposing cross-equation restrictions on multivariate linear predictors as in \citet{chamberlain1982}.
\end{remark}

\section{Two-step minimum-distance estimation}
\label{sec:estimation}

\subsection{First-step quantile regressions}

For each $t=1,\ldots,T$, estimate \eqref{eq:population-first-stage} by
\begin{equation}
\widehat\pi_t(q)
 =\argmin_{\pi\in\R^{pT}}
\frac{1}{n}\sum_{i=1}^n
\rho_q(Y_{it}-X_i'\pi).
\label{eq:sample-first-stage}
\end{equation}
Stack the first-step coefficients as
\begin{equation}
\widehat\Pi(q)
 =\big(\widehat\pi_1(q)',\ldots,\widehat\pi_T(q)'\big)'
 \in\R^{pT^2}.
\label{eq:stacked-pi-hat}
\end{equation}
For each ordered pair $s\neq t$, form
\begin{equation}
\widehat b_{st}(q)
 =E_s'\{\widehat\pi_s(q)-\widehat\pi_t(q)\}.
\label{eq:sample-contrast}
\end{equation}
Let $r=T(T-1)$ and stack these contrasts in a fixed order:
\begin{equation}
\widehat b(q)
 =\big(\widehat b_{st}(q)':s\neq t\big)'
 \in\R^{pr}.
\label{eq:stacked-b}
\end{equation}
Define
\begin{equation}
D=\1_r\otimes I_p\in\R^{pr\times p}.
\label{eq:D-matrix}
\end{equation}
The population restriction is
\begin{equation}
b_0(q)=D\beta_0.
\label{eq:contrast-system}
\end{equation}

It is useful to write the contrast operation as a matrix. Let $C\in\R^{pr\times pT^2}$ have a block row
\begin{equation}
C_{st}=(e_s-e_t)'\otimes E_s',
\qquad s\neq t.
\label{eq:C-block}
\end{equation}
Then
\begin{equation}
\widehat b(q)=C\widehat\Pi(q),
\qquad
b_0(q)=C\Pi_0(q).
\label{eq:C-representation}
\end{equation}
The matrix $C$ has full row rank $pr$.

\subsection{Equal-weight and efficient minimum distance}

A transparent preliminary estimator averages all diagonal-minus-off-diagonal contrasts:
\begin{equation}
\widehat\beta_{\mathrm{EW}}(q)
 =(D'D)^{-1}D'\widehat b(q)
 =\frac{1}{T(T-1)}\sum_{s=1}^T\sum_{t\neq s}
 E_s'\{\widehat\pi_s(q)-\widehat\pi_t(q)\}.
\label{eq:equal-weight-estimator}
\end{equation}
This estimator uses every period as both a diagonal equation and a comparison equation. A baseline-period estimator that uses only a subset of the contrasts is also consistent, but it generally discards information.

Let
\begin{equation}
\sqrt{n}\{\widehat b(q)-b_0(q)\}
 \toD N(0,\Omega_q),
\label{eq:b-limit-definition}
\end{equation}
where $\Omega_q$ is positive definite. Given a positive-definite weight $A_n\toP A$, define
\begin{equation}
\widehat\beta_A(q)
 =\argmin_{\beta\in\R^p}
 \{\widehat b(q)-D\beta\}'A_n
 \{\widehat b(q)-D\beta\}.
\label{eq:generic-md}
\end{equation}
The closed-form solution is
\begin{equation}
\widehat\beta_A(q)
 =(D'A_nD)^{-1}D'A_n\widehat b(q).
\label{eq:generic-md-closed}
\end{equation}
The efficient minimum-distance estimator within the class based on $\widehat b(q)$ uses $A_n=\widehat\Omega_q^{-1}$:
\begin{equation}
\widehat\beta_{\mathrm{MD}}(q)
 =(D'\widehat\Omega_q^{-1}D)^{-1}
 D'\widehat\Omega_q^{-1}\widehat b(q).
\label{eq:optimal-md}
\end{equation}
Here efficiency is relative to regular linear combinations of the first-step quantile-projection restrictions; it is not a claim of semiparametric efficiency for the full distributional model.

\begin{proposition}[Consistency]
\label{prop:consistency}
Suppose Assumptions \ref{ass:sampling}--\ref{ass:unique-projection} hold, and let each $\widehat\pi_t(q)$ be any measurable minimizer of \eqref{eq:sample-first-stage}. Then
\begin{equation}
\widehat\Pi(q)\toP\Pi_0(q),
\qquad
\widehat b(q)\toP D\beta_0.
\label{eq:first-step-consistency}
\end{equation}
If $A_n\toP A$ for a positive-definite matrix $A$, then
\begin{equation}
\widehat\beta_A(q)\toP\beta_0.
\label{eq:md-consistency}
\end{equation}
In particular, both the equal-weight and feasible efficient minimum-distance estimators are consistent whenever the estimated weight converges to a positive-definite limit.
\end{proposition}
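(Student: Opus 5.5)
The argument runs in three stages. First we show each first-step estimator is consistent, using the convexity argument for quantile regression. Second, consistency of the stacked vector and of the contrasts follows from the continuous mapping theorem. Third, consistency of the closed-form minimum-distance estimator follows from Slutsky's lemma.

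\textbf{Consistency of each $\widehat\pi_t(q)$.} Fix $t$ and define the centered objective
\[
Q_n(\pi)=\frac1n\sum_{i=1}^n\{\rho_q(Y_{it}-X_i'\pi)-\rho_q(Y_{it})\}.
\]
Its population counterpart is
\[
Q(\pi)=\E[\rho_q(Y_{it}-X_i'\pi)-\rho_q(Y_{it})].
\]
Centering by $\rho_q(Y_{it})$ is convenient but not strictly needed, since $\E|Y_{it}|<\infty$. The Lipschitz bound $|\rho_q(a)-\rho_q(b)|\le|a-b|$ gives $|\rho_q(Y_{it}-X_i'\pi)-\rho_q(Y_{it})|\le\|X_i\|\|\pi\|$, so the moment conditions of Assumption \ref{ass:sampling} make $Q$ finite. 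The weak law of large numbers then gives $Q_n(\pi)\toP Q(\pi)$ pointwise. Because each $Q_n$ is convex in $\pi$, the convexity lemma (Andersen--Gill; Pollard 1991; Newey--McFadden Theorem 2.7) upgrades pointwise convergence to uniform convergence on compact sets. The same lemma shows that any minimizer of $Q_n$ converges in probability to the minimizer of $Q$, provided that minimizer is unique.

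\textbf{Uniqueness of $\pi_{0t}(q)$.} The one point needing care is that Assumption \ref{ass:unique-projection} states uniqueness only for $\delta_0(q)$. We transfer it to $\pi_{0t}(q)$ through the reparametrization $\pi=\delta+E_t\beta_0$. Since $Y_{it}-X_i'\pi=R_{it}-X_i'\delta$, we have
\[
\E\rho_q(Y_{it}-X_i'\pi)=\E\rho_q(R_{it}-X_i'\delta).
\]
By Lemma \ref{lem:composite-stationarity}, $R_{it}\mid X_i\overset{d}{=}R_{i1}\mid X_i$, so the right-hand side is the objective in \eqref{eq:delta-projection}. The map $\delta\mapsto\delta+E_t\beta_0$ is a bijection of $\R^{pT}$. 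Hence the minimizer of $Q$ is unique and equals $\delta_0(q)+E_t\beta_0$, which is Proposition \ref{prop:projection-representation}. This is the only step that uses the paper's specific structure; I expect it to be the main obstacle, and the rest is standard. Positive definiteness of $\E[X_iX_i']$ is not needed beyond this point.

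\textbf{Contrasts and the minimum-distance estimator.} Consistency for each $t$ gives $\widehat\Pi(q)\toP\Pi_0(q)$ by stacking. The contrast matrix $C$ is fixed and linear, so the continuous mapping theorem gives
\[
\widehat b(q)=C\widehat\Pi(q)\toP C\Pi_0(q)=b_0(q)=D\beta_0,
\]
where the last equality is \eqref{eq:all-contrasts-beta}. For the second step, $D=\1_r\otimes I_p$ has full column rank and $A$ is positive definite, so $D'AD$ is positive definite. Since $A_n\toP A$, $D'A_nD$ is invertible with probability approaching one. Applying Slutsky's lemma to \eqref{eq:generic-md-closed} gives
\[
\widehat\beta_A(q)\toP(D'AD)^{-1}D'AD\beta_0=\beta_0.
\]
The equal-weight estimator is the case $A_n=I$. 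The feasible efficient estimator is covered whenever $\widehat\Omega_q^{-1}$ converges to a positive-definite limit, as the statement assumes.
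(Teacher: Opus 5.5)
Your proposal is correct and follows essentially the paper's route: a pointwise law of large numbers plus convexity (the convex argmin theorem) for each period-$t$ quantile regression, then stacking and premultiplying by $C$, then continuous mapping and Slutsky applied to the closed form \eqref{eq:generic-md-closed}. The differences are minor. The paper handles the unbounded parameter space with an explicit coercivity bound based on $\rho_q(u)\geq\min\{q,1-q\}|u|$ and positive definiteness of $\E[X_iX_i']$, whereas you let the convexity lemma together with uniqueness do that work. You also spell out the transfer of uniqueness from $\delta_0(q)$ to $\pi_{0t}(q)$, which the paper leaves implicit.
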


\subsection{Equivalent full-system formulation}

The estimator can also be written in the full form suggested by \eqref{eq:projection-representation}. Let
\begin{equation}
G=\begin{pmatrix}E_1\\E_2\\ \vdots\\E_T\end{pmatrix}
 \in\R^{pT^2\times p},
\qquad
H=\1_T\otimes I_{pT}
 \in\R^{pT^2\times pT},
\label{eq:G-H}
\end{equation}
and let $M=[G\ H]$. Then
\begin{equation}
\Pi_0(q)=G\beta_0+H\delta_0(q)=M\theta_0(q),
\qquad
\theta_0(q)=\begin{pmatrix}\beta_0\\ \delta_0(q)\end{pmatrix}.
\label{eq:full-system}
\end{equation}
For $T\geq 2$, $M$ has full column rank $p+pT$. If $\widehat\Sigma_{\Pi,q}$ consistently estimates the covariance of $\sqrt n\{\widehat\Pi(q)-\Pi_0(q)\}$, full-system GLS is
\begin{equation}
\widehat\theta_{\mathrm{GLS}}(q)
 =\{M'\widehat\Sigma_{\Pi,q}^{-1}M\}^{-1}
 M'\widehat\Sigma_{\Pi,q}^{-1}\widehat\Pi(q).
\label{eq:full-gls}
\end{equation}

\begin{proposition}[Equivalence of the two formulations]
\label{prop:equivalence}
Suppose $T\geq 2$ and $\Sigma_{\Pi,q}$ is positive definite. Let
\[
\Omega_q=C\Sigma_{\Pi,q}C'.
\]
Then the first $p$ coordinates of the population-weight version of \eqref{eq:full-gls} are identical to the contrast estimator in \eqref{eq:optimal-md} with weight $\Omega_q^{-1}$. The same equivalence holds with consistent estimated covariance matrices up to $o_p(n^{-1/2})$.
\end{proposition}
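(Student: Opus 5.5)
The plan is to reduce the full-system GLS problem to the contrast problem by a reparametrization of the data space that separates the directions spanned by $H$ from their complement. The key fact is that $C$ annihilates $H$, so the contrast vector $C\Pi$ is exactly the part of the data that carries information on $\beta$ once $\delta$ is profiled out.

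First I will verify the algebra. Since $C_{st}H=(e_s-e_t)'\1_T\otimes E_s'=0$, we have $CH=0$. Since $C_{st}G=E_s'(E_s-E_t)=I_p$, we have $CG=D$. Next comes a dimension count: $H$ has full column rank $pT$, and $C$ has full row rank $pr$ with $pr=pT^2-pT$. Hence the rows of $C$ span exactly the orthogonal complement of $\operatorname{col}(H)$, that is, $\ker(H')$. Define
\[
K=\begin{pmatrix}C\\ L\end{pmatrix}, \qquad \text{with } L=(H'H)^{-1}H' \text{ (any complement making } K \text{ nonsingular would do)}.
\]
Because GLS is invariant to nonsingular linear transformations of the stacked data and of the weight ($\Sigma\mapsto K\Sigma K'$), I will work with the transformed system $K\widehat\Pi=KM\theta$. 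Its two blocks are
\[
C\widehat\Pi=D\beta, \qquad L\widehat\Pi=LG\beta+\delta.
\]

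Second, I will profile out $\delta$. The second block is unrestricted in $\delta\in\R^{pT}$: for any $\beta$, choosing $\delta=L\widehat\Pi-LG\beta-\Gamma(\cdot)$ optimally leaves a concentrated criterion. By the standard partitioned-GLS (Frisch--Waugh) argument, or equivalently the Schur complement of the transformed weight $(K\Sigma K')^{-1}$, this concentrated criterion equals the GLS criterion for the first block alone, with weight $(C\Sigma C')^{-1}=\Omega_q^{-1}$. The underlying reason is that in a seemingly unrelated system where one equation block has its own unrestricted free parameter of full dimension, that block is exactly fit and contributes nothing to the remaining parameters. Concretely, minimizing $(z-W\beta-(0,\delta')')'V^{-1}(\cdots)$ over $\delta$ leaves $(z_1-D\beta)'(V_{11})^{-1}(z_1-D\beta)$. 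I will verify this with the block-inverse formula. This identity is the main step and the one requiring care: I need to confirm that the unrestricted block enters with an identity loading (true here because $LH=I$), so that the Schur complement of $V^{-1}$ is exactly $V_{11}^{-1}$. The minimizer in $\beta$ is then $(D'\Omega_q^{-1}D)^{-1}D'\Omega_q^{-1}C\widehat\Pi$, which is the estimator in \eqref{eq:optimal-md}. Positive definiteness of $\Sigma_{\Pi,q}$ and full row rank of $C$ ensure that $\Omega_q$ is invertible, and $D$ has full column rank.

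Finally, for the estimated version, the same exact algebraic identity holds with $\widehat\Sigma_{\Pi,q}$ in place of $\Sigma_{\Pi,q}$ and $\widehat\Omega_q=C\widehat\Sigma_{\Pi,q}C'$. If instead $\widehat\Omega_q$ is built separately, any two consistent weights give estimators differing by $o_p(n^{-1/2})$. This follows by writing each estimator as $\beta_0+(D'A_nD)^{-1}D'A_n(\widehat b-D\beta_0)$ and using $\widehat b-D\beta_0=O_p(n^{-1/2})$ from \eqref{eq:b-limit-definition} together with $A_n-\Omega_q^{-1}=o_p(1)$.
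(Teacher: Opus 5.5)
Your proposal is correct and is essentially the paper's argument. Both profile $\delta$ out of the full GLS criterion and use $CH=0$, $CG=D$, and $\operatorname{row}(C)=\operatorname{col}(H)^{\perp}$ to show that the concentrated criterion is $(\widehat b-D\beta)'(C\Sigma_{\Pi,q}C')^{-1}(\widehat b-D\beta)$. The paper simply cites the identity $C'(C\Sigma C')^{-1}C=\Sigma^{-1}-\Sigma^{-1}H(H'\Sigma^{-1}H)^{-1}H'\Sigma^{-1}$, which your change of basis $K=(C;L)$ plus Schur complement actually proves.

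One wording point: the second block is not literally ``exactly fit.'' The optimal $\delta$ leaves residual $V_{21}V_{11}^{-1}(z_1-D\beta)$, not zero. Your block-inverse computation still delivers the correct concentrated weight $V_{11}^{-1}$.

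Your remark that the feasible equivalence is exact whenever $\widehat\Omega_q=C\widehat\Sigma_{\Pi,q}C'$, which is how the paper defines it, slightly sharpens the paper's $o_p(n^{-1/2})$ statement.
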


The contrast formulation is often simpler computationally because it eliminates the nuisance projection coefficient before the second step. The full-system formulation is useful for seeing all model restrictions and for recovering $\delta_0(q)$ if desired.

\subsection{Implementation}

The estimator can be implemented as follows.

\begin{enumerate}[label=\textbf{Step \arabic*.},leftmargin=5.5em]
\item Form $X_i=(X_{i1}',\ldots,X_{iT}')'$.

\item For each $t$, run the cross-sectional quantile regression in \eqref{eq:sample-first-stage} and retain the full vector $\widehat\pi_t(q)$.

\item For every ordered pair $s\neq t$, extract the coefficient on $X_{is}$ from the period-$s$ and period-$t$ regressions and compute \eqref{eq:sample-contrast}.

\item Compute the equal-weight estimator \eqref{eq:equal-weight-estimator}. This provides a consistent preliminary estimate and a convenient diagnostic display of the individual contrasts.

\item Estimate the joint covariance $\Omega_q$ using Section \ref{subsec:feasible-covariance}, and compute \eqref{eq:optimal-md}.

\item Report the covariance estimator in \eqref{eq:beta-vhat}, the overidentification statistic in \eqref{eq:J-statistic}, and, when desired, cluster-bootstrap confidence intervals.
\end{enumerate}

\section{Limit theory and inference}
\label{sec:asymptotics}

\subsection{Joint asymptotic distribution of the first step}

Define the population projection residual
\begin{equation}
U_{it}(q)=Y_{it}-X_i'\pi_{0t}(q)
 =A_i+V_{it}-X_i'\delta_0(q).
\label{eq:projection-residual}
\end{equation}
Under stationarity, $U_{it}(q)\mid X_i$ has the same distribution for every $t$. Let $f_{U(q)\mid X}(u\mid X_i)$ denote its conditional density.

\begin{assumption}[Regularity for quantile projections]
\label{ass:qr-regularity}
For the chosen $q$:
\begin{enumerate}[label=(\alph*)]
\item $\E\|X_i\|^{2+\eta}<\infty$ for some $\eta>0$.

\item The conditional density $f_{U(q)\mid X}(u\mid x)$ exists in a neighborhood of zero, is uniformly bounded there, and is continuous at zero for almost every $x$.

\item The matrix
\begin{equation}
J_q=\E\big[f_{U(q)\mid X}(0\mid X_i)X_iX_i'\big]
\label{eq:Jq}
\end{equation}
is finite and positive definite.

\item The covariance matrices defined below are finite, and $\Omega_q$ is positive definite.
\end{enumerate}
\end{assumption}

Continuity at zero and the first-order condition for the population projection imply
\begin{equation}
\E\left[X_i\psi_q\{U_{it}(q)\}\right]=0,
\qquad t=1,\ldots,T.
\label{eq:population-score-zero}
\end{equation}
For each individual define the stacked quantile score
\begin{equation}
g_i(q)
 =\begin{pmatrix}
 X_i\psi_q(U_{i1}(q))\\
 \vdots\\
 X_i\psi_q(U_{iT}(q))
 \end{pmatrix}
 \in\R^{pT^2}.
\label{eq:stacked-score}
\end{equation}
Let
\begin{equation}
S_q=\E[g_i(q)g_i(q)'],
\qquad
\mathcal J_q=I_T\otimes J_q.
\label{eq:S-Jbig}
\end{equation}
The off-diagonal blocks of $S_q$ retain arbitrary within-individual dependence. In particular, for periods $t$ and $s$,
\begin{equation}
S_{q,ts}
 =\E\left[
 X_iX_i'\psi_q(U_{it}(q))\psi_q(U_{is}(q))
 \right].
\label{eq:score-cross-cov}
\end{equation}

\begin{theorem}[Joint Bahadur representation]
\label{thm:first-stage-asymptotic}
Under Assumptions \ref{ass:sampling}--\ref{ass:qr-regularity},
\begin{equation}
\sqrt n\{\widehat\Pi(q)-\Pi_0(q)\}
 =\mathcal J_q^{-1}\frac{1}{\sqrt n}
 \sum_{i=1}^n g_i(q)+o_p(1),
\label{eq:bahadur-stacked}
\end{equation}
and therefore
\begin{equation}
\sqrt n\{\widehat\Pi(q)-\Pi_0(q)\}
 \toD N(0,\Sigma_{\Pi,q}),
\qquad
\Sigma_{\Pi,q}=\mathcal J_q^{-1}S_q\mathcal J_q^{-1}.
\label{eq:SigmaPi}
\end{equation}
Consequently,
\begin{equation}
\sqrt n\{\widehat b(q)-D\beta_0\}
 \toD N(0,\Omega_q),
\qquad
\Omega_q=C\Sigma_{\Pi,q}C'.
\label{eq:Omegaq}
\end{equation}
\end{theorem}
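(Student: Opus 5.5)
The argument is the standard Bahadur representation for a possibly misspecified linear quantile regression, in the style of \citet{angristetal2006}, applied period by period and then stacked. Fix $t$ and write $\pi_t=\pi_{0t}(q)+h/\sqrt n$. Consider the convex process
\[
Z_{nt}(h)=\sum_{i=1}^n\big[\rho_q(U_{it}(q)-X_i'h/\sqrt n)-\rho_q(U_{it}(q))\big],
\]
which is minimized at $\widehat h_t=\sqrt n\{\widehat\pi_t(q)-\pi_{0t}(q)\}$. By Knight's identity,
\[
\rho_q(u-v)-\rho_q(u)=-v\psi_q(u)+\int_0^v\{\1(u\le s)-\1(u\le 0)\}\,ds .
\]
This splits $Z_{nt}(h)$ into two pieces. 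The first is the linear term $-h'W_{nt}$, where
\[
W_{nt}=n^{-1/2}\sum_i X_i\psi_q(U_{it}(q)).
\]
The second is a remainder, which is a sum of nonnegative terms.

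The first step is to handle the linear term. By \eqref{eq:population-score-zero}, $\E[X_i\psi_q(U_{it})]=0$. Since $|\psi_q|\le 1$ and $\E\|X_i\|^2<\infty$ by Assumption \ref{ass:qr-regularity}(a), $W_{nt}=O_p(1)$. Note that \eqref{eq:population-score-zero} is the correct centering even under misspecification. It uses only the first-order condition of \eqref{eq:population-first-stage}, together with the absence of an atom of $U_{it}$ at zero given $X_i$, which follows from part (b).

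The second step handles the remainder, and I expect it to be the main technical point. Its conditional expectation given $X_i$ is
\[
\sum_i\int_0^{X_i'h/\sqrt n}\{F_{U|X}(s\mid X_i)-F_{U|X}(0\mid X_i)\}\,ds .
\]
By the mean value theorem, boundedness of the density near zero, continuity at zero, and dominated convergence (using $\E\|X_i\|^2<\infty$), this converges to $\tfrac12 h'J_q h$ for each fixed $h$. One must be careful to control the event where $|X_i'h|/\sqrt n$ leaves the neighborhood in which the density is bounded. On that event I would use the crude bound $|X_i'h|/\sqrt n\cdot\1(\|X_i\|>c\sqrt n)$, whose sum vanishes under the $2+\eta$ moment in (a). The variance of the remainder is bounded by
\[
\sum_i\E\big[(X_i'h/\sqrt n)^2\,\1\{|U_{it}|\le |X_i'h|/\sqrt n\}\big]=o(1),
\]
again by dominated convergence. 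Hence, pointwise in $h$,
\[
Z_{nt}(h)=-h'W_{nt}+\tfrac12h'J_qh+o_p(1).
\]
The convexity lemma (Pollard; Hjort--Pollard) upgrades this to uniform convergence on compacts. Because $J_q$ is positive definite, it then yields
\[
\widehat h_t=J_q^{-1}W_{nt}+o_p(1).
\]
This works for any measurable minimizer, consistent with Proposition \ref{prop:consistency}. Also, $J_q$ is common to all $t$, because $U_{it}(q)\mid X_i$ has the same law in every period. This follows from Lemma \ref{lem:composite-stationarity} and Proposition \ref{prop:projection-representation}, since $U_{it}=R_{it}-X_i'\delta_0(q)$.

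The third step stacks the $T$ representations. Because each $o_p(1)$ remainder is negligible separately, their stack is negligible jointly. This gives \eqref{eq:bahadur-stacked} with $\mathcal J_q=I_T\otimes J_q$; no independence across $t$ is needed. The multivariate CLT applies to the i.i.d.\ mean-zero vectors $g_i(q)$, whose covariance $S_q$ is finite by the bounded scores and (a). With Slutsky's theorem, this gives \eqref{eq:SigmaPi}, and the off-diagonal blocks of $S_q$ automatically carry the cross-period dependence. Finally, by \eqref{eq:C-representation} and \eqref{eq:contrast-system}, $\widehat b(q)-D\beta_0=C\{\widehat\Pi(q)-\Pi_0(q)\}$. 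The continuous mapping theorem then yields \eqref{eq:Omegaq} with $\Omega_q=C\Sigma_{\Pi,q}C'$, which is nondegenerate by (d).
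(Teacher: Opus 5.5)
Your proposal is correct and follows the same route as the paper. Both arguments derive a period-by-period Bahadur representation from Knight's identity and the convexity lemma, centered at the population check-loss minimizer, with the common $J_q$ coming from stationarity of $U_{it}(q)\mid X_i$. Both then stack across the fixed $T$ periods, apply the multivariate CLT to the i.i.d.\ cluster scores $g_i(q)$, and premultiply by $C$; you simply supply the remainder-term expectation and variance bounds and the Hjort--Pollard step that the paper invokes as standard.
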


The theorem uses the misspecification-robust quantile-regression covariance. In general,
\[
\Pp\{U_{it}(q)\leq 0\mid X_i\}\neq q,
\]
because $X_i'\pi_{0t}(q)$ is a linear projection rather than the true conditional quantile. Thus the score covariance in \eqref{eq:score-cross-cov} cannot be replaced mechanically by $q(1-q)\E[X_iX_i']$, and cross-period blocks cannot be ignored.

\begin{corollary}[Asymptotic distribution of minimum distance]
\label{cor:md-asymptotic}
Let $A_n\toP A$, where $A$ is positive definite. Then
\begin{equation}
\sqrt n\{\widehat\beta_A(q)-\beta_0\}
 \toD N(0,V_A),
\label{eq:generic-md-limit}
\end{equation}
where
\begin{equation}
V_A
 =(D'AD)^{-1}D'A\Omega_q A D(D'AD)^{-1}.
\label{eq:generic-md-var}
\end{equation}
For the optimal weight $A=\Omega_q^{-1}$,
\begin{equation}
V_{\mathrm{MD},q}
 =(D'\Omega_q^{-1}D)^{-1}.
\label{eq:optimal-md-var}
\end{equation}
\end{corollary}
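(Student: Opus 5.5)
The argument is the standard delta-method calculation for a linear minimum-distance estimator. It takes Theorem \ref{thm:first-stage-asymptotic} as given for the first step.

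\emph{Step 1: an exact error representation.} Since $D'A_nD$ is invertible with probability approaching one, the closed form \eqref{eq:generic-md-closed} holds. Because $(D'A_nD)^{-1}D'A_nD=I_p$, it gives
\[
\widehat\beta_A(q)-\beta_0=(D'A_nD)^{-1}D'A_n\{\widehat b(q)-D\beta_0\}.
\]
Invertibility needs justification. $D=\1_r\otimes I_p$ has full column rank $p$, and $A$ is positive definite, so $D'AD$ is positive definite. Matrix inversion is continuous at $D'AD$, so the continuous mapping theorem yields $(D'A_nD)^{-1}D'A_n\toP(D'AD)^{-1}D'A=:L_A$. On the event where $D'A_nD$ is singular, the estimator may be defined arbitrarily; that event has vanishing probability.

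\emph{Step 2: the limit.} Multiplying the representation by $\sqrt n$ gives $\sqrt n\{\widehat\beta_A(q)-\beta_0\}=L_{A_n}\sqrt n\{\widehat b(q)-D\beta_0\}$, where $L_{A_n}=(D'A_nD)^{-1}D'A_n$. By \eqref{eq:Omegaq}, $\sqrt n\{\widehat b(q)-D\beta_0\}\toD N(0,\Omega_q)$, so this term is $O_p(1)$. Hence
\[
(L_{A_n}-L_A)\sqrt n\{\widehat b(q)-D\beta_0\}=o_p(1).
\]
Slutsky's lemma then gives convergence to $N(0,L_A\Omega_qL_A')$. Expanding $L_A\Omega_qL_A'$ and using the symmetry of $A$ produces exactly \eqref{eq:generic-md-var}.

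\emph{Step 3: the optimal weight.} Set $A=\Omega_q^{-1}$, which is positive definite by Assumption \ref{ass:qr-regularity}(d). The middle factor collapses, $D'A\Omega_qAD=D'\Omega_q^{-1}D$, so
\[
V_A=(D'\Omega_q^{-1}D)^{-1}(D'\Omega_q^{-1}D)(D'\Omega_q^{-1}D)^{-1}=(D'\Omega_q^{-1}D)^{-1}.
\]
This is \eqref{eq:optimal-md-var}.

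To justify calling this weight optimal, one can add the Gauss--Markov-type argument. Write $V_A-V_{\mathrm{MD},q}=(L_A-L^*)\Omega_q(L_A-L^*)'\succeq0$, where $L^*=(D'\Omega_q^{-1}D)^{-1}D'\Omega_q^{-1}$. The cross terms vanish because $L_AD=L^*D=I_p$, which implies $(L_A-L^*)\Omega_qL^{*\prime}=(L_A-L^*)D(D'\Omega_q^{-1}D)^{-1}=0$.

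None of these steps is a real obstacle; the substantive work is already in Theorem \ref{thm:first-stage-asymptotic}. The only points needing care are the invertibility of $D'A_nD$ with probability approaching one and, for the feasible estimator, the requirement that the estimated weight $\widehat\Omega_q^{-1}$ be consistent for $\Omega_q^{-1}$. That consistency is assumed through $A_n\toP A$ here and is established separately in Section \ref{subsec:feasible-covariance}.
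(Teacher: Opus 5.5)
Your proposal is correct and follows essentially the same route as the paper: you use the exact linear representation $\sqrt n\{\widehat\beta_A(q)-\beta_0\}=(D'A_nD)^{-1}D'A_n\sqrt n\{\widehat b(q)-D\beta_0\}$, apply Slutsky with the limit in Theorem \ref{thm:first-stage-asymptotic}, and substitute $A=\Omega_q^{-1}$. Your invertibility-with-probability-approaching-one remark (automatic in the paper, which takes $A_n$ positive definite) and your Gauss--Markov argument for optimality are valid additions that the paper does not spell out.
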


\subsection{Feasible analytic covariance estimation}
\label{subsec:feasible-covariance}

Let
\begin{equation}
\widehat U_{it}(q)=Y_{it}-X_i'\widehat\pi_t(q).
\label{eq:residual-hat}
\end{equation}
Choose a bounded, symmetric, Lipschitz kernel $K$ satisfying $\int K(v)\,dv=1$, and a bandwidth $h_n$. A period-specific estimator of the quantile-regression Jacobian is
\begin{equation}
\widehat J_{q,t}
 =\frac{1}{nh_n}\sum_{i=1}^n
 K\!\left(\frac{\widehat U_{it}(q)}{h_n}\right)X_iX_i'.
\label{eq:Jhat-period}
\end{equation}
Because the population Jacobian is common across periods under stationarity, a pooled estimator is
\begin{equation}
\widehat J_q=\frac{1}{T}\sum_{t=1}^T\widehat J_{q,t},
\qquad
\widehat{\mathcal J}_q=I_T\otimes\widehat J_q.
\label{eq:Jhat-pooled}
\end{equation}
Using separate $\widehat J_{q,t}$ on the block diagonal is also consistent and can be useful as a diagnostic for misspecification.

Define the estimated stacked score
\begin{equation}
\widehat g_i(q)
 =\begin{pmatrix}
 X_i\psi_q(\widehat U_{i1}(q))\\
 \vdots\\
 X_i\psi_q(\widehat U_{iT}(q))
 \end{pmatrix},
\qquad
\overline g(q)=\frac{1}{n}\sum_{i=1}^n\widehat g_i(q),
\label{eq:ghat}
\end{equation}
and estimate its covariance by the cluster outer product
\begin{equation}
\widehat S_q
 =\frac{1}{n}\sum_{i=1}^n
 \{\widehat g_i(q)-\overline g(q)\}
 \{\widehat g_i(q)-\overline g(q)\}'.
\label{eq:Shat}
\end{equation}
Then
\begin{equation}
\widehat\Sigma_{\Pi,q}
 =\widehat{\mathcal J}_q^{-1}
 \widehat S_q
 \widehat{\mathcal J}_q^{-1},
\qquad
\widehat\Omega_q
 =C\widehat\Sigma_{\Pi,q}C'.
\label{eq:Sigma-Omega-hat}
\end{equation}
The estimated covariance of $\sqrt n\{\widehat\beta_{\mathrm{MD}}(q)-\beta_0\}$ is
\begin{equation}
\widehat V_{\mathrm{MD},q}
 =(D'\widehat\Omega_q^{-1}D)^{-1},
\label{eq:beta-vhat}
\end{equation}
and the estimated covariance of $\widehat\beta_{\mathrm{MD}}(q)$ itself is $\widehat V_{\mathrm{MD},q}/n$.

\begin{assumption}[Kernel and bandwidth]
\label{ass:kernel}
In addition to Assumption \ref{ass:qr-regularity}, suppose $\E\|X_i\|^{4+\eta}<\infty$ for some $\eta>0$, the conditional density is locally Lipschitz at zero, $K$ satisfies the conditions above, and
\begin{equation}
h_n\to 0,
\qquad nh_n\to\infty,
\qquad \sqrt n\,h_n^2\to\infty.
\label{eq:bandwidth}
\end{equation}
\end{assumption}

\begin{proposition}[Consistency of the analytic covariance estimator]
\label{prop:covariance-consistency}
Under Assumptions \ref{ass:sampling}--\ref{ass:kernel},
\begin{equation}
\widehat J_q\toP J_q,
\quad
\widehat S_q\toP S_q,
\quad
\widehat\Sigma_{\Pi,q}\toP\Sigma_{\Pi,q},
\quad
\widehat\Omega_q\toP\Omega_q,
\label{eq:all-cov-consistency}
\end{equation}
and
\begin{equation}
\widehat V_{\mathrm{MD},q}\toP V_{\mathrm{MD},q}.
\label{eq:V-consistency}
\end{equation}
\end{proposition}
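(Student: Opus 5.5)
The plan is to prove the three consistency statements for $\widehat J_q$, $\widehat S_q$ and $\widehat V_{\mathrm{MD},q}$ separately. The remaining ones then follow by the continuous mapping theorem. Throughout, I take as given the Bahadur representation of Theorem \ref{thm:first-stage-asymptotic}, which gives $\widehat\pi_t(q)-\pi_{0t}(q)=O_p(n^{-1/2})$ for every $t$. Since $T$ is fixed, it suffices to work block by block and period by period.

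\textbf{Jacobian.} Fix $t$ and write $\widehat U_{it}=U_{it}-X_i'\Delta_t$ with $\Delta_t=\widehat\pi_t-\pi_{0t}=O_p(n^{-1/2})$. Split $\widehat J_{q,t}-J_q$ into three pieces.
\begin{enumerate}[label=(\alph*)]
\item The infeasible term $\tilde J_t=(nh_n)^{-1}\sum_i K(U_{it}/h_n)X_iX_i'$. Its mean is $\E[\int K(v)f_{U\mid X}(h_nv\mid X_i)\,dv\,X_iX_i']$, which tends to $J_q$ by bounded local density, continuity at zero and dominated convergence. Its variance is $O((nh_n)^{-1}\E\|X_i\|^4)$, which vanishes because $nh_n\to\infty$ and $\E\|X_i\|^{4+\eta}<\infty$.
\item The estimation effect. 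By the Lipschitz property of $K$,
\[
\|\widehat J_{q,t}-\tilde J_t\|\le \frac{L}{nh_n^2}\sum_i \|X_i\|^3\,\|\Delta_t\|=O_p\big(h_n^{-2}n^{-1/2}\big)=o_p(1),
\]
exactly because $\sqrt n\,h_n^2\to\infty$. This is why that rate condition appears.
\item Pooling. Averaging over $t$ gives $\widehat J_q\toP J_q$, since $f_{U(q)\mid X}$ is common across periods under stationarity (Lemma \ref{lem:composite-stationarity}).
\end{enumerate}
If separate $\widehat J_{q,t}$ are used on the diagonal, the same argument applies without pooling.

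\textbf{Score covariance.} This is the main obstacle, because $\psi_q$ is discontinuous, so the Lipschitz argument is unavailable. The $(t,s)$ block of $n^{-1}\sum_i\widehat g_i\widehat g_i'$ is $n^{-1}\sum_i X_iX_i'\psi_q(\widehat U_{it})\psi_q(\widehat U_{is})$.
\begin{enumerate}[label=(\alph*)]
\item Using $|\psi_q(a)\psi_q(b)-\psi_q(a')\psi_q(b')|\le |\1(a\le0)-\1(a'\le0)|+|\1(b\le0)-\1(b'\le0)|$, the difference from the infeasible average is bounded by
\[
\frac{1}{n}\sum_i\|X_i\|^2\Big[\1\big(|U_{it}|\le \|X_i\|\|\Delta_t\|\big)+\1\big(|U_{is}|\le\|X_i\|\|\Delta_s\|\big)\Big].
\]
\item On the event $\|\Delta_t\|\le\epsilon$, which has probability tending to one, replace $\|\Delta_t\|$ by $\epsilon$. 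The expectation is then at most $2\bar f\epsilon\,\E\|X_i\|^3$ for $\epsilon$ below the width of the neighborhood where the density is bounded by $\bar f$.
\item Using Markov's inequality and then letting $\epsilon\downarrow0$ makes this term $o_p(1)$.
\item The infeasible average converges to $S_{q,ts}$ by the law of large numbers.
\end{enumerate}
As an alternative to this direct bound, one can use stochastic equicontinuity of the VC class $\{x x'\1(u\le x'\pi)\}$ together with continuity in $\pi$ of its expectation. Finally, $\overline g(q)\toP \E g_i(q)=0$ by \eqref{eq:population-score-zero} and the same argument, so the centering term is negligible and $\widehat S_q\toP S_q$.

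\textbf{Conclusion.} $J_q$ is positive definite, so $\widehat{\mathcal J}_q^{-1}\toP\mathcal J_q^{-1}$, and continuous mapping gives $\widehat\Sigma_{\Pi,q}\toP\Sigma_{\Pi,q}$. Because $C$ is fixed, $\widehat\Omega_q\toP\Omega_q$. Since $\Omega_q$ is positive definite and $D$ has full column rank, $(D'\widehat\Omega_q^{-1}D)^{-1}\toP V_{\mathrm{MD},q}$, again by continuous mapping.
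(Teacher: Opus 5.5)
Your proposal is correct and follows essentially the same route as the paper's proof: an infeasible kernel average, then a Lipschitz bound on the generated-residual error that uses $\sqrt n\,h_n^2\to\infty$, pooling over periods, a vanishing-fraction argument for the indicator scores, a cluster law of large numbers, and continuous mapping, written out in more detail than the paper gives. One small repair is needed in the score step: the bound $\Pp(|U_{it}|\le\epsilon\|X_i\|\mid X_i)\le 2\bar f\epsilon\|X_i\|$ requires $\epsilon\|X_i\|$, not just $\epsilon$, to lie in the neighborhood $[-c,c]$ where the density is bounded by $\bar f$, so you should split on $\{\|X_i\|\le c/\epsilon\}$ and control the remainder by $\E[\|X_i\|^2\1(\|X_i\|>c/\epsilon)]\to0$ (or simply apply dominated convergence).
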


A direct influence-function implementation avoids storing the full $pT^2\times pT^2$ covariance matrix. Set
\[
\widehat\varphi_i(q)
 =C\widehat{\mathcal J}_q^{-1}
 \{\widehat g_i(q)-\overline g(q)\}.
\]
Then $n^{-1}\sum_i\widehat\varphi_i(q)\widehat\varphi_i(q)'$ is algebraically equal to $\widehat\Omega_q$ in \eqref{eq:Sigma-Omega-hat}.

\subsection{Overidentification test}

The model requires all $r=T(T-1)$ contrast vectors to equal the same $\beta_0$. With the efficient estimator, define
\begin{equation}
J_n(q)
 =n\{\widehat b(q)-D\widehat\beta_{\mathrm{MD}}(q)\}'
 \widehat\Omega_q^{-1}
 \{\widehat b(q)-D\widehat\beta_{\mathrm{MD}}(q)\}.
\label{eq:J-statistic}
\end{equation}

\begin{proposition}[Projection-restriction test]
\label{prop:J-test}
Under the assumptions of Proposition \ref{prop:covariance-consistency} and the stationary location model,
\begin{equation}
J_n(q)\toD\chi^2_{p\{T(T-1)-1\}}.
\label{eq:J-limit}
\end{equation}
For $T=2$, the degrees of freedom are $p$.
\end{proposition}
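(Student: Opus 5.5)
The argument is the standard overidentification argument for optimally weighted minimum distance. It uses Theorem \ref{thm:first-stage-asymptotic} for the limit of $\widehat b(q)$ and Proposition \ref{prop:covariance-consistency} for $\widehat\Omega_q\toP\Omega_q$.

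\textbf{Step 1: linearize the residual.} Under the stationary location model, Proposition \ref{prop:projection-representation} gives $b_0(q)=D\beta_0$. Set $Z_n=\sqrt n\{\widehat b(q)-D\beta_0\}$, so that $Z_n\toD Z\sim N(0,\Omega_q)$ by \eqref{eq:Omegaq}. Substituting the closed form \eqref{eq:optimal-md} gives
\[
\sqrt n\{\widehat b(q)-D\widehat\beta_{\mathrm{MD}}(q)\}
=\{I_{pr}-D(D'\widehat\Omega_q^{-1}D)^{-1}D'\widehat\Omega_q^{-1}\}Z_n,
\]
because the $D\beta_0$ term is annihilated exactly: $(I-D(D'WD)^{-1}D'W)D=0$ for any weight $W$. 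Hence $J_n(q)=Z_n'\widehat M_n Z_n$, where
\[
\widehat M_n=\widehat\Omega_q^{-1}-\widehat\Omega_q^{-1}D(D'\widehat\Omega_q^{-1}D)^{-1}D'\widehat\Omega_q^{-1}.
\]

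\textbf{Step 2: pass to the limit.} By Assumption \ref{ass:qr-regularity}(d), $\Omega_q$ is positive definite. Also $D=\1_r\otimes I_p$ has full column rank $p$. Matrix inversion is continuous at these limits, so $\widehat M_n\toP M=\Omega_q^{-1}-\Omega_q^{-1}D(D'\Omega_q^{-1}D)^{-1}D'\Omega_q^{-1}$. Since $Z_n=O_p(1)$, Slutsky's lemma and the continuous mapping theorem give $J_n(q)\toD Z'MZ$.

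\textbf{Step 3: identify the limit law.} Write $Z=\Omega_q^{1/2}\xi$ with $\xi\sim N(0,I_{pr})$. Then
\[
Z'MZ=\xi'\{I_{pr}-P\}\xi,\qquad P=\Omega_q^{-1/2}D(D'\Omega_q^{-1}D)^{-1}D'\Omega_q^{-1/2}.
\]
Here $P$ is the orthogonal projection onto the column space of $\Omega_q^{-1/2}D$, which has rank $p$. So $I-P$ is a symmetric idempotent matrix of rank $pr-p=p\{T(T-1)-1\}$, and $\xi'(I-P)\xi\sim\chi^2_{p\{T(T-1)-1\}}$. When $T=2$ we have $r=2$, so the degrees of freedom equal $p$.

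\textbf{Main obstacle.} The only delicate point is Step 2, which requires the limit $\Omega_q$ to be nonsingular. That fails if $C$ does not have full row rank or if $\Sigma_{\Pi,q}$ is degenerate. I will handle it by citing Assumption \ref{ass:qr-regularity}(d) directly and noting that $C$ has full row rank. This matters especially for $T>2$, where the contrasts share first-step coefficients and one might worry about hidden linear dependencies. If such dependencies arose, the argument would require a generalized inverse, and the degrees of freedom would become $\operatorname{rank}(\Omega_q)-p$.
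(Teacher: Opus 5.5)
Your proposal is correct and follows the same route as the paper's proof. You write $J_n(q)$ as a quadratic form in $Z_n=\sqrt n\{\widehat b(q)-D\beta_0\}$ with the $\Omega_q^{-1}$-weighted annihilator of $\operatorname{col}(D)$, replace $\widehat\Omega_q$ by $\Omega_q$ via Proposition \ref{prop:covariance-consistency}, and read off a $\chi^2$ limit with $pr-p$ degrees of freedom; your Steps 1--3 just make explicit the ``standard minimum-distance algebra'' and the $o_p(1)$ weight replacement that the paper leaves implicit, and your projection onto $\operatorname{col}(\Omega_q^{-1/2}D)$ is the precise form of the paper's $P_D^{\Omega}$.
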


The test is a specification test for the finite collection of quantile-projection restrictions in \eqref{eq:all-contrasts-beta}. Rejection is evidence against the stationary location model or against the maintained linear location specification. Nonrejection does not establish equality of the full conditional disturbance distributions, because the test uses only their linear quantile projections.

\subsection{Pairs cluster bootstrap}

The analytic covariance estimator requires a density estimate at the fitted quantile projection. A simple alternative is a pairs bootstrap at the individual level. In each bootstrap replication, draw $n$ individuals with replacement, retaining the full $T$-period vector for each selected individual. Re-estimate the $T$ first-step quantile regressions and recompute the minimum-distance estimator. This preserves all within-individual dependence.

Let $\widehat\beta_{\mathrm{MD}}^{*(b)}(q)$ denote the result in bootstrap replication $b$, and let $\overline\beta^*(q)=B^{-1}\sum_{b=1}^B\widehat\beta_{\mathrm{MD}}^{*(b)}(q)$. The bootstrap covariance estimator is
\begin{equation}
\widehat{\Var}^{*}\{\widehat\beta_{\mathrm{MD}}(q)\}
 =\frac{1}{B-1}\sum_{b=1}^B
 \{\widehat\beta_{\mathrm{MD}}^{*(b)}(q)-\overline\beta^*(q)\}
 \{\widehat\beta_{\mathrm{MD}}^{*(b)}(q)-\overline\beta^*(q)\}'.
\label{eq:bootstrap-cov}
\end{equation}
Under the regularity conditions for Theorem \ref{thm:first-stage-asymptotic}, the conditional distribution of
$\sqrt n\{\widehat\beta_{\mathrm{MD}}^*(q)-\widehat\beta_{\mathrm{MD}}(q)\}$ consistently estimates the limiting distribution in Corollary \ref{cor:md-asymptotic}. The weight matrix may be re-estimated in each bootstrap sample or held fixed at a consistent full-sample estimate; the two choices are first-order equivalent.

\subsection{Covariance matrices of observable residuals}
\label{subsec:residual-covariance}

It is useful to distinguish three covariance objects. The covariance needed for inference on $\beta_0$ is the quantile-score covariance $S_q$ in \eqref{eq:S-Jbig}. The covariance of the projection residual vector $U_i(q)=(U_{i1}(q),\ldots,U_{iT}(q))'$ is a different object and can be estimated directly by the sample covariance of $\widehat U_i(q)$. Finally, the structural composite disturbance
\begin{equation}
R_i(\beta_0)
 =\begin{pmatrix}
 Y_{i1}-X_{i1}'\beta_0\\
 \vdots\\
 Y_{iT}-X_{iT}'\beta_0
 \end{pmatrix}
 =A_i\1_T+V_i
\label{eq:composite-vector}
\end{equation}
is observable up to $\beta_0$.

Suppose $\E\|R_i(\beta_0)\|^2<\infty$. Define
\begin{equation}
\widehat R_i
 =R_i\{\widehat\beta_{\mathrm{MD}}(q)\},
\qquad
\overline R=\frac{1}{n}\sum_{i=1}^n\widehat R_i,
\label{eq:Rhat}
\end{equation}
and
\begin{equation}
\widehat\Gamma_R
 =\frac{1}{n}\sum_{i=1}^n
 (\widehat R_i-\overline R)(\widehat R_i-\overline R)'.
\label{eq:GammaRhat}
\end{equation}
Then $\widehat\Gamma_R\toP\Gamma_R:=\Var\{R_i(\beta_0)\}$. An analogous statement holds for the sample covariance of $\widehat U_i(q)$. These plug-in estimators are useful for describing serial dependence in the observable composite disturbance.

The relation between $\Gamma_R$ and the covariance of the latent idiosyncratic disturbance can be stated precisely. Let $V_i=(V_{i1},\ldots,V_{iT})'$ and $\Gamma_V=\Var(V_i)$.

\begin{proposition}[What is identified about the covariance of $V_i$]
\label{prop:latent-covariance}
Suppose $\E A_i^2+\E\|V_i\|^2<\infty$ and Assumption \ref{ass:stationarity} holds. Then $\Cov(A_i,V_{it})$ is common across $t$, and there is a scalar $\lambda_0$ such that
\begin{equation}
\Gamma_R=\Gamma_V+\lambda_0\1_T\1_T'.
\label{eq:covariance-decomposition}
\end{equation}
Consequently, for every fixed matrix $L$ satisfying $L\1_T=0$,
\begin{equation}
\Var(LV_i)=L\Gamma_VL'=L\Gamma_RL',
\label{eq:difference-covariance}
\end{equation}
and $L\widehat\Gamma_RL'$ is consistent for this covariance. In particular, the covariance matrix of any vector of time differences of $V_i$ is point identified. The level covariance $\Gamma_V$ is identified only up to the scalar component $\lambda_0\1_T\1_T'$ under the baseline assumptions.
\end{proposition}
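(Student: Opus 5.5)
The plan is to compute $\Gamma_R$ entrywise from $R_{it}=A_i+V_{it}$ and show that stationarity makes every $A$-dependent piece constant across $(t,s)$. Write $\Var(A_i)=\sigma_A^2$ and $c_t=\Cov(A_i,V_{it})$. The second-moment condition $\E A_i^2+\E\|V_i\|^2<\infty$ makes all of these finite. Bilinearity then gives
\[
\Cov(R_{it},R_{is})=\sigma_A^2+c_t+c_s+\Cov(V_{it},V_{is}).
\]
If $c_t=c$ for every $t$, this becomes $\Gamma_R=\Gamma_V+(\sigma_A^2+2c)\1_T\1_T'$. So we may take $\lambda_0=\sigma_A^2+2\Cov(A_i,V_{i1})$.

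The one substantive step is showing that $c_t$ does not depend on $t$. Here I use Assumption \ref{ass:stationarity} conditionally on $(X_i,A_i)$. It gives $\E[V_{it}\mid X_i,A_i]=\E[V_{is}\mid X_i,A_i]$ almost surely, because equal conditional distributions have equal conditional means whenever these exist. They do exist, since $\E|V_{it}|<\infty$. The same reasoning gives $\E V_{it}=\E V_{is}$. By iterated expectations,
\[
\E[A_iV_{it}]=\E\big[A_i\,\E(V_{it}\mid X_i,A_i)\big],
\]
and the right-hand side is the same for all $t$. Integrability of $A_iV_{it}$ follows from Cauchy--Schwarz, which justifies the conditioning step. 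Hence $c_t=\E[A_iV_{it}]-\E A_i\,\E V_{it}$ is common across $t$. This is the only point where care is needed. Note that stationarity pins down only marginal conditional laws, not joint ones, so it restricts $\Cov(A_i,V_{it})$ but says nothing about the serial covariances $\Cov(V_{it},V_{is})$. The decomposition leaves those covariances untouched, which is why $\Gamma_V$ appears unrestricted.

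For \eqref{eq:difference-covariance}, multiply the decomposition on both sides by $L$. Because $L\1_T=0$, the rank-one term vanishes and $L\Gamma_RL'=L\Gamma_VL'=\Var(LV_i)$. The same conclusion follows directly from $LR_i=LV_i$.

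Consistency of $L\widehat\Gamma_RL'$ follows from $\widehat\Gamma_R\toP\Gamma_R$, stated above for $\widehat\beta_{\mathrm{MD}}(q)\toP\beta_0$, together with continuity. To make this self-contained, write $\widehat R_i=R_i(\beta_0)-\mathbf X_i(\widehat\beta-\beta_0)$, where $\mathbf X_i$ stacks the $X_{it}'$. Then $\widehat\Gamma_R$ equals the infeasible sample covariance plus cross terms bounded by $\|\widehat\beta-\beta_0\|$ times sample moments of $\|X_i\|^2$ and $\|X_i\|\|R_i\|$. These moments are $O_p(1)$ by the law of large numbers and Assumption \ref{ass:qr-regularity}, and the infeasible term converges by the law of large numbers. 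Alternatively, since $LR_i$ is free of $A_i$, one can apply this argument directly to the differenced residuals.

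Finally, the non-identification of $\Gamma_V$ holds because, for any $\lambda$, the matrix $\Gamma_V+(\lambda_0-\lambda)\1_T\1_T'$ paired with $\lambda$ reproduces the same $\Gamma_R$. Concretely, a constructive perturbation replaces $(A_i,V_i)$ by $(A_i+\xi_i,\,V_i-\xi_i\1_T)$, with $\xi_i$ independent noise. This leaves the observables unchanged and preserves conditional stationarity when the stationarity requirement is checked given $(X_i,A_i+\xi_i)$. I would note this only briefly, since the claim concerns identification "under the baseline assumptions" and the algebraic decomposition is the substantive content.
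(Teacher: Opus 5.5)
Your proof is correct and follows the same route as the paper. Equality of the conditional laws of $V_{it}$ given $(X_i,A_i)$ gives equal conditional means, hence a common $\Cov(A_i,V_{it})$; expanding $\Var(A_i\1_T+V_i)$ yields $\lambda_0=\Var(A_i)+2\Cov(A_i,V_{i1})$; $L\1_T=0$ removes the rank-one term; and consistency of $L\widehat\Gamma_RL'$ follows by continuous mapping. Your sketch of $\widehat\Gamma_R\toP\Gamma_R$ and your perturbation $(A_i+\xi_i,\,V_i-\xi_i\1_T)$ showing that $\Gamma_V$ is not identified go beyond what the paper writes out, and both are sound.
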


Thus the covariance matrix of the latent $V_i$ is not separately identified from the persistent component generated by $A_i$. Orthogonality between $A_i$ and $V_i$ alone does not resolve the problem because $\Var(A_i)$ remains unknown. A restriction on at least one serial covariance of $V_i$, serial independence, or another normalization that identifies $\lambda_0$ would complete the decomposition. None of these additional restrictions is required for inference on $\beta_0$.

\section{Combining several quantile projections}
\label{sec:multiple-q}

Stationarity implies \eqref{eq:projection-representation} for every $q$ at which the relevant quantile projection is unique. Therefore several quantile indices can be combined without changing the target parameter. Let $0<q_1<\cdots<q_L<1$ be fixed. For each $q_\ell$, construct $\widehat b(q_\ell)$ as in \eqref{eq:stacked-b}, and stack
\begin{equation}
\widehat b_L
 =\begin{pmatrix}
 \widehat b(q_1)\\ \vdots\\ \widehat b(q_L)
 \end{pmatrix},
\qquad
D_L=\1_L\otimes D.
\label{eq:multi-q-stack}
\end{equation}
The population restriction is
\begin{equation}
b_{0,L}=D_L\beta_0.
\label{eq:multi-q-restriction}
\end{equation}
Let $\Omega_L$ be the covariance of $\sqrt n(\widehat b_L-b_{0,L})$, including cross-quantile covariances. These covariances are estimated by stacking the influence functions based on
\[
X_i\psi_{q_\ell}\{U_{it}(q_\ell)\},
\qquad \ell=1,\ldots,L,
\quad t=1,\ldots,T.
\]
The composite minimum-distance estimator is
\begin{equation}
\widehat\beta_{\mathrm{CMD}}
 =(D_L'\widehat\Omega_L^{-1}D_L)^{-1}
 D_L'\widehat\Omega_L^{-1}
 \widehat b_L,
\label{eq:composite-md}
\end{equation}
with asymptotic variance
\begin{equation}
V_{\mathrm{CMD}}
 =(D_L'\Omega_L^{-1}D_L)^{-1}.
\label{eq:composite-var}
\end{equation}
This estimator can improve precision when the quantile-projection scores contain complementary information. The use of cross-quantile restrictions follows the general GMM and minimum-distance logic in \citet{firpoetal2022}, here applied to the common-slope restrictions generated by panel stationarity. It also yields a joint specification test with $p\{LT(T-1)-1\}$ degrees of freedom. Rejection of the cross-quantile restrictions is particularly informative in light of Theorem \ref{thm:incompatibility}: it indicates that the data do not support a single stationary location coefficient across the chosen quantile projections.


\section{Monte Carlo experiments}\label{mc:sec}

This section reports simulation evidence on the finite-sample behavior of
the estimators and specification tests developed above. The experiments
are organized around five questions. First, how do the equally weighted
and efficient minimum-distance estimators perform in a correctly
specified baseline, and how much do they gain over a single
diagonal-minus-off-diagonal contrast? Second, how does performance
deteriorate as within-individual variation in the regressor shrinks, and
how much of that deterioration is offset by additional time periods?
Third, does the misspecification-robust covariance matrix deliver
reliable inference when the period-specific quantile projections are not
correct conditional quantile functions, and how badly do conventional
quantile-regression standard errors fail in that case? Fourth, what is
the efficiency cost of the quantile-based procedure under Gaussian errors
and its benefit under heavy tails, relative to least-squares fixed-effects
estimation? Fifth, do the overidentification tests detect violations of
conditional time stationarity, and does the cross-quantile test have
power against quantile-varying slopes that single-quantile tests cannot
detect, as Theorem~\ref{thm:incompatibility} predicts?%

\subsection{Common design and implementation}\label{mc:common}

The baseline data-generating process is a one-regressor version of
model~\eqref{eq:location-model} in which all maintained assumptions hold. Let $C_i$, $Z_{it}$, $\eta_i$, and $\varepsilon_{it}$ be
mutually independent standard normal variables. The regressor is
\begin{equation}
X_{it}=\frac{C_i+\sigma_X Z_{it}}{\sqrt{1+\sigma_X^{2}}},
\qquad t=1,\dots,T,
\label{mc:eq:X}
\end{equation}
so that $\mathrm{Var}(X_{it})=1$, the within-individual correlation is
$1/(1+\sigma_X^{2})$, and $\mathrm{Var}(X_{it}-X_{is})=
2\sigma_X^{2}/(1+\sigma_X^{2})$ for $s\neq t$. The individual effect is
correlated with the regressor history through
\begin{equation}
A_i=\rho_A\,\overline X_i/s_X+\sqrt{1-\rho_A^{2}}\,\eta_i,
\qquad
s_X=\mathrm{sd}(\overline X_i)=
\sqrt{\tfrac{1+\sigma_X^{2}/T}{1+\sigma_X^{2}}},
\label{mc:eq:A}
\end{equation}
and the idiosyncratic error $V_{it}$ is a stationary Gaussian AR(1),
$V_{it}=\rho_V V_{i,t-1}+\sqrt{1-\rho_V^{2}}\,\varepsilon_{it}$ with
$V_{i1}\sim N(0,1)$, independent of $(X_i,A_i)$, so conditional time
stationarity holds by construction. Outcomes are
$Y_{it}=\beta_0 X_{it}+A_i+V_{it}$ with $\beta_0=1$. Because $(X_i,A_i,V_i)$
is jointly Gaussian in the baseline, the period-$t$ cross-sectional
quantile projections of $Y_{it}$ on the full history
$X_i=(1,X_{i1},\dots,X_{iT})'$ are exact conditional quantile functions,
with slope $\beta_0+\rho_A/(Ts_X)$ on $X_{it}$ and $\rho_A/(Ts_X)$ on
$X_{is}$, $s\neq t$; the diagonal-minus-off-diagonal contrasts therefore
identify $\beta_0$ at every quantile, and closed-form expressions for the
asymptotic covariance matrix $\Omega_q$ of the stacked contrasts are
available. We use these expressions to construct an infeasible
``oracle'' minimum-distance estimator, denoted OMD, that weights with
$\Omega_q^{-1}$ evaluated at the truth; it provides the efficiency
benchmark against which the feasible two-step estimator can be judged.

Every design is estimated the same way. The first step runs $T$
cross-sectional linear quantile regressions of $Y_{it}$ on $X_i$, one per
period, each including a common intercept that is excluded from the
structural contrasts. Quantile regressions are solved with a
Frisch--Newton interior-point algorithm; solutions agree with exact
linear-programming solutions to at least six decimal places, which
matters because solver noise of even $10^{-2}$ would contaminate the
size of the overidentification tests. The second step forms the
$r=T(T-1)$ contrasts $\widehat b_{st}(q)$ for all ordered pairs
$s\neq t$ and computes: the single-contrast estimator
$\widehat\beta_{SC}$ based on $(s,t)=(1,2)$; the equally weighted
average $\widehat\beta_{EW}$; and the minimum-distance estimator
$\widehat\beta_{MD}$ with weight $\widehat\Omega_q^{-1}$, together with
the overidentification statistic $J_n(q)$ on $r-1$ degrees of freedom.
The covariance matrix $\widehat\Omega_q$ is the misspecification-robust
cluster sandwich of Section~\ref{sec:asymptotics}: the Jacobian is estimated by a
Gaussian-kernel density estimate at zero of the projection residuals,
pooled across periods, with the Silverman rule-of-thumb bandwidth
$h=1.06\,\widehat\sigma\,n^{-1/5}$ (a robust scale estimate,
sensitivity to $h$ is examined below), and the score outer product is
clustered at the individual level, retaining all cross-period blocks.
Inverses are computed directly unless the condition number exceeds
$10^{12}$, in which case a pseudoinverse is substituted and the event
recorded; in the designs of this section such adjustments essentially
never occur, and we report their frequency. Composite cross-quantile
estimation stacks the contrasts for $L$ quantiles and weights with the
inverse of the corresponding $rL\times rL$ cluster covariance matrix;
the associated joint statistic has $rL-1$ degrees of freedom. Unless
noted otherwise each configuration uses $2{,}000$ Monte Carlo
replications; bootstrap experiments use $1{,}000$ replications with
$B=399$ pairs-bootstrap draws. Coverage refers to nominal $95\%$
equal-tailed intervals $\widehat\beta\pm1.96\,\mathrm{se}$, and all test
rejection rates are at the $5\%$ level.

\subsection{Design 1: baseline performance}\label{mc:d1}

The first design crosses $\rho_A\in\{0,0.75\}$, $\rho_V\in\{0,0.7\}$,
$n\in\{250,500,1000\}$, and $T\in\{2,4\}$ at the median, $q=0.5$, with
$\sigma_X=1$. Table~\ref{mc:tab:d1} reports the empirically relevant
half of the grid with a correlated individual effect
($\rho_A=0.75$); the full grid, including the exogenous case
$\rho_A=0$ whose results are nearly identical, appears in
Table~\ref{mc:tab:d1full}.

\begin{table}[t]
\centering
\caption{Design 1: baseline with correlated individual effect
($\rho_A=0.75$, $\sigma_X=1$, $q=0.5$, $2{,}000$ replications). SC is
the single contrast $\widehat b_{12}$; EW and MD are the equally
weighted and feasible minimum-distance estimators; OMD uses the oracle
weight. ``se'' is the average robust standard error, ``cov'' the
coverage of nominal $95\%$ intervals, and $J$ the rejection rate of
$J_n(0.5)$ at the $5\%$ level ($r-1$ degrees of freedom). The full
grid including $\rho_A=0$ is in Table~\ref{mc:tab:d1full}.}
\label{mc:tab:d1}
\setlength{\tabcolsep}{3.4pt}
{\small
\begin{tabular}{llcccccccccccc}
\toprule
&& SC & \multicolumn{4}{c}{EW} & \multicolumn{5}{c}{MD} & OMD & \\
\cmidrule(lr){3-3}\cmidrule(lr){4-7}\cmidrule(lr){8-12}\cmidrule(lr){13-13}
Configuration & $n$ & rmse & bias & sd & rmse & cov & bias & sd & rmse
& se & cov & rmse & $J$ \\
\midrule
$T=2$, $\rho_V=0$ & 250 & 0.143 & 0.002 & 0.125 & 0.125 & 0.950 & 0.002 & 0.125 & 0.125 & 0.123 & 0.946 & 0.125 & 0.038 \\
 & 500 & 0.098 & -0.002 & 0.084 & 0.084 & 0.963 & -0.003 & 0.085 & 0.085 & 0.087 & 0.960 & 0.084 & 0.052 \\
 & 1000 & 0.070 & 0.002 & 0.060 & 0.060 & 0.955 & 0.002 & 0.060 & 0.060 & 0.061 & 0.957 & 0.060 & 0.046 \\
\addlinespace
$T=2$, $\rho_V=0.7$ & 250 & 0.099 & 0.000 & 0.087 & 0.087 & 0.949 & 0.000 & 0.088 & 0.088 & 0.089 & 0.945 & 0.087 & 0.045 \\
 & 500 & 0.072 & -0.000 & 0.062 & 0.062 & 0.954 & -0.000 & 0.062 & 0.062 & 0.063 & 0.954 & 0.062 & 0.046 \\
 & 1000 & 0.051 & -0.000 & 0.044 & 0.044 & 0.958 & -0.000 & 0.044 & 0.044 & 0.044 & 0.955 & 0.044 & 0.050 \\
\addlinespace
$T=4$, $\rho_V=0$ & 250 & 0.158 & -0.001 & 0.070 & 0.070 & 0.951 & -0.002 & 0.073 & 0.073 & 0.067 & 0.930 & 0.070 & 0.088 \\
 & 500 & 0.106 & 0.001 & 0.050 & 0.050 & 0.949 & 0.001 & 0.050 & 0.050 & 0.049 & 0.944 & 0.050 & 0.060 \\
 & 1000 & 0.077 & -0.001 & 0.035 & 0.035 & 0.949 & -0.001 & 0.035 & 0.035 & 0.035 & 0.941 & 0.035 & 0.052 \\
\addlinespace
$T=4$, $\rho_V=0.7$ & 250 & 0.110 & 0.000 & 0.057 & 0.057 & 0.945 & 0.001 & 0.058 & 0.058 & 0.050 & 0.913 & 0.054 & 0.099 \\
 & 500 & 0.077 & 0.001 & 0.039 & 0.039 & 0.956 & 0.001 & 0.038 & 0.038 & 0.037 & 0.937 & 0.038 & 0.065 \\
 & 1000 & 0.055 & 0.000 & 0.027 & 0.027 & 0.957 & 0.000 & 0.026 & 0.026 & 0.026 & 0.953 & 0.026 & 0.061 \\
\addlinespace
\bottomrule
\end{tabular}}
\end{table}

Four findings stand out. First, all estimators are essentially unbiased
at every sample size: no mean bias in the table exceeds
$0.005$ in absolute value, consistent with the identification result
that each contrast recovers $\beta_0$ exactly. Second, averaging over
contrasts is valuable. With $T=2$ the two contrasts are close to
exchangeable and $\widehat\beta_{EW}$ reduces the RMSE of the single
contrast by roughly $12$--$14\%$; with $T=4$, where twelve contrasts are
available, the reduction is roughly $50\%$ at every $n$.
Third, the feasible minimum-distance estimator is never appreciably
worse than the equally weighted estimator and the scope for efficiency
gains in this design is modest: the oracle weighting reduces the
variance relative to equal weighting by about $8\%$ when $T=4$ and
$\rho_V=0.7$ (variance ratio $0.92$) and by nothing when $T=2$, since
with two symmetric contrasts equal weighting is already efficient. The
feasible estimator captures the oracle gain only as $n$ grows: at
$n=250$ estimating the $12\times12$ weight matrix costs slightly more
than optimal weighting gains, while by $n=1000$ the feasible and oracle
RMSEs agree to three decimals. Fourth, inference is reliable. Coverage
of the robust intervals lies between $0.94$ and $0.96$ in every
configuration for $\widehat\beta_{EW}$, with mild undercoverage for
$\widehat\beta_{MD}$ at $T=4$, $n=250$ (about $0.91$--$0.93$) that
disappears as $n$ grows, and the $J_n(0.5)$ statistic is close to its
nominal size for $T=2$ at all $n$ (rejection rates $0.038$--$0.061$)
while showing the familiar modest overrejection of overidentification
tests with estimated weight matrices at $T=4$ and small $n$
($0.088$--$0.112$ at $n=250$, declining to $0.050$--$0.061$ at
$n=1000$). Average robust standard errors track the Monte Carlo standard
deviations closely throughout.

\subsection{Design 2: weak within-individual variation and the role of
additional periods}\label{mc:d2}

Identification rests entirely on within-individual movement in the
regressor, so the practically important stress test shrinks
$\sigma_X$. We fix $\rho_A=0.75$, $\rho_V=0.5$, $q=0.5$, and cross
$\sigma_X\in\{1,0.5,0.25\}$---equivalently
$\mathrm{Var}(X_{it}-X_{is})\in\{1,0.4,0.118\}$---with
$T\in\{2,4,6\}$ and $n\in\{500,1000\}$. Figure~\ref{mc:fig:d2}
summarizes the results at $n=1000$; the complete numerical results are
in Table~\ref{mc:tab:d2}.

\begin{figure}[t]
\centering
\includegraphics[width=\textwidth]{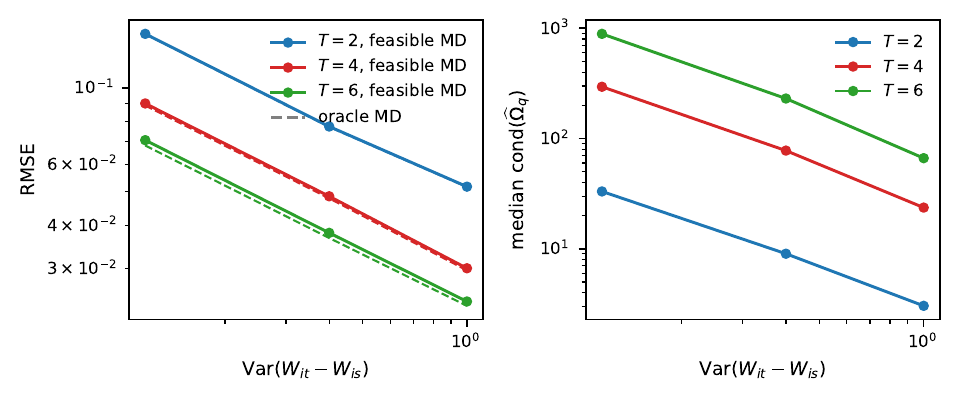}
\caption{Design 2. Left: RMSE of the feasible minimum-distance estimator
(solid, markers) and the oracle-weighted estimator (dashed) at $n=1000$,
$q=0.5$, plotted against the within-pair regressor variance
$\mathrm{Var}(X_{it}-X_{is})=2\sigma_X^2/(1+\sigma_X^2)$; both axes are
logarithmic. The equally weighted estimator is visually
indistinguishable from feasible MD and is omitted. Right: median
condition number of $\widehat\Omega_{q}$ across replications.}
\label{mc:fig:d2}
\end{figure}

Three conclusions emerge. First, precision deteriorates at the expected
rate but the procedure remains entirely well behaved: moving from
$\sigma_X=1$ to $\sigma_X=0.25$ multiplies the RMSE by a factor of $2.8$--$3.0$ at every $(n,T)$,
close to the ratio $\sqrt{1/0.118}\approx2.9$ implied by the reduction in usable
variation, bias remains negligible, and coverage stays within sampling
error of $0.95$. Second, additional periods substitute for
within-period variation: at $\sigma_X=0.25$ and $n=500$, going from
$T=2$ to $T=6$ cuts the RMSE from $0.203$ to $0.103$, so three times as
many periods roughly halve the RMSE, in line with the increase in the
number of contrasts from $2$ to $30$. Third, the numerical warnings that
motivate the diagnostics are visible but benign in these ranges. The
median condition number of $\widehat\Omega_q$ rises steeply, from about
$3$ at $(\sigma_X,T)=(1,2)$ to about $1{,}040$ at $(0.25,6)$, yet it
remains far from the $10^{12}$ threshold: the pseudoinverse safeguard
was never triggered in $36{,}000$ replications. The cost of weight-matrix
estimation does, however, show up in the tests: with $T=6$ (thirty
contrasts) and $n=500$ the $J_n$ statistic rejects a true null $11$--$13\%$
of the time, falling to $7$--$9\%$ at $n=1000$, and the efficient
estimator undercovers slightly (about $0.92$--$0.93$) where equal
weighting remains at $0.95$--$0.96$. When the number of contrasts is
large relative to $n$, the equally weighted estimator paired with robust
standard errors is the more dependable summary, and the oracle
calculations show the forgone efficiency is below $9\%$ of variance in
this design.

\subsection{Design 3: misspecified projections and robust
inference}\label{mc:d3}

The robust covariance matrix is designed to remain valid when the
period-specific linear quantile projections are not correct conditional
quantile functions. To generate that situation while preserving the
identification restrictions, we keep \eqref{mc:eq:X} with $\sigma_X=1$
and $\rho_V=0.5$ but make the individual effect nonlinear in the
history and the error scale dependent on it:
\begin{equation}
A_i=0.75\,\widetilde X_i+0.5\,(\widetilde X_i^{2}-1)+\eta_i,
\qquad
V_{it}=\bigl(0.5+0.5\,|\widetilde X_i|\bigr) G_{it},
\qquad
\widetilde X_i=\overline X_i/s_X,
\label{mc:eq:d3}
\end{equation}
where $G_{it}$ is the Gaussian AR(1) above. Conditional time
stationarity still holds, so every contrast continues to estimate
$\beta_0$, but the true conditional quantile of $Y_{it}$ given $X_i$ is
nonlinear in $X_i$ and the linear projections are genuinely
misspecified. We compare four ways of assessing the sampling
variability of the same point estimator $\widehat\beta_{EW}$: (i) the
full robust sandwich, with cross-period score covariances; (ii) a
restricted sandwich that keeps the robust per-period blocks but zeroes
the cross-period blocks; (iii) conventional quantile-regression standard
errors that use the $q(1-q)E[XX']$ score matrix period by period; and
(iv) the pairs cluster bootstrap. Panel A of Table~\ref{mc:tab:d3}
reports the analytic methods across
$q\in\{0.25,0.5,0.75\}$, $T\in\{2,4\}$, $n\in\{250,500,1000\}$; Panel B
reports the bootstrap experiment at $T=2$, $n=500$.

\begin{table}[t]
\centering
\caption{Design 3: nonlinear conditional quantiles. Panel A compares
standard errors for the same estimator $\widehat\beta_{EW}(q)$: the full
robust sandwich (``full''), the sandwich with cross-period score blocks
set to zero (``no cross''), and conventional per-period
quantile-regression standard errors (``conv.''); each pair of columns
reports the average standard error and the resulting coverage. MD cov is
the coverage of the feasible minimum-distance estimator with the full
sandwich, and $J$ the rejection rate of $J_n(q)$. Panel B reports the
pairs cluster bootstrap ($T=2$, $n=500$, $1{,}000$ replications,
$B=399$, minimum-distance weight fixed at the full-sample estimate).}
\label{mc:tab:d3}
\setlength{\tabcolsep}{3.2pt}
{\small
\begin{tabular}{llcccccccccc}
\multicolumn{12}{l}{\textit{Panel A: analytic standard errors
($2{,}000$ replications)}}\\
\toprule
&&&& \multicolumn{2}{c}{full} & \multicolumn{2}{c}{no cross}
& \multicolumn{2}{c}{conv.} & MD & \\
\cmidrule(lr){5-6}\cmidrule(lr){7-8}\cmidrule(lr){9-10}
Configuration & $n$ & bias & sd & se & cov & se & cov & se & cov & cov
& $J$ \\
\midrule
$q=0.25$, $T=2$ & 250 & -0.001 & 0.106 & 0.108 & 0.946 & 0.169 & 0.996 & 0.170 & 0.998 & 0.935 & 0.059 \\
 & 500 & -0.000 & 0.075 & 0.076 & 0.952 & 0.119 & 0.998 & 0.119 & 0.998 & 0.945 & 0.058 \\
 & 1000 & 0.000 & 0.053 & 0.053 & 0.952 & 0.084 & 0.999 & 0.083 & 0.999 & 0.949 & 0.049 \\
\addlinespace
$q=0.25$, $T=4$ & 250 & 0.001 & 0.064 & 0.066 & 0.954 & 0.097 & 0.997 & 0.097 & 0.997 & 0.909 & 0.167 \\
 & 500 & -0.001 & 0.047 & 0.046 & 0.942 & 0.068 & 0.994 & 0.068 & 0.994 & 0.929 & 0.091 \\
 & 1000 & -0.001 & 0.033 & 0.033 & 0.949 & 0.048 & 0.997 & 0.048 & 0.996 & 0.938 & 0.068 \\
\addlinespace
$q=0.5$, $T=2$ & 250 & 0.002 & 0.097 & 0.102 & 0.951 & 0.166 & 0.998 & 0.166 & 0.998 & 0.954 & 0.045 \\
 & 500 & -0.001 & 0.070 & 0.072 & 0.952 & 0.116 & 0.999 & 0.116 & 0.999 & 0.952 & 0.050 \\
 & 1000 & 0.001 & 0.050 & 0.051 & 0.959 & 0.082 & 1.000 & 0.082 & 1.000 & 0.959 & 0.040 \\
\addlinespace
$q=0.5$, $T=4$ & 250 & 0.001 & 0.061 & 0.063 & 0.954 & 0.095 & 0.998 & 0.095 & 0.998 & 0.906 & 0.106 \\
 & 500 & 0.001 & 0.043 & 0.044 & 0.952 & 0.067 & 0.998 & 0.067 & 0.998 & 0.933 & 0.066 \\
 & 1000 & -0.001 & 0.030 & 0.031 & 0.954 & 0.047 & 0.999 & 0.047 & 0.999 & 0.948 & 0.051 \\
\addlinespace
$q=0.75$, $T=2$ & 250 & -0.001 & 0.122 & 0.118 & 0.936 & 0.189 & 0.995 & 0.191 & 0.997 & 0.933 & 0.053 \\
 & 500 & -0.003 & 0.086 & 0.085 & 0.947 & 0.135 & 0.996 & 0.135 & 0.998 & 0.943 & 0.043 \\
 & 1000 & 0.002 & 0.062 & 0.060 & 0.948 & 0.095 & 0.997 & 0.096 & 0.997 & 0.946 & 0.048 \\
\addlinespace
$q=0.75$, $T=4$ & 250 & 0.002 & 0.076 & 0.072 & 0.936 & 0.107 & 0.994 & 0.109 & 0.993 & 0.876 & 0.193 \\
 & 500 & -0.000 & 0.054 & 0.052 & 0.943 & 0.077 & 0.993 & 0.077 & 0.993 & 0.916 & 0.115 \\
 & 1000 & 0.000 & 0.037 & 0.037 & 0.947 & 0.055 & 0.997 & 0.055 & 0.997 & 0.934 & 0.078 \\
\addlinespace
\bottomrule
\end{tabular}

\medskip
\begin{tabular}{llcccccccccc}
\multicolumn{12}{l}{\textit{Panel B: pairs cluster bootstrap}}\\
\toprule
&& \multicolumn{5}{c}{$\widehat\beta_{EW}$}
& \multicolumn{5}{c}{$\widehat\beta_{MD}$} \\
\cmidrule(lr){3-7}\cmidrule(lr){8-12}
&& sd & se & cov & se$_{boot}$ & cov$_{boot}$
& sd & se & cov & se$_{boot}$ & cov$_{boot}$ \\
\midrule
$q=0.25$ & 500 & 0.076 & 0.076 & 0.936 & 0.082 & 0.968 & 0.077 & 0.075 & 0.937 & 0.082 & 0.965 \\
$q=0.5$ & 500 & 0.068 & 0.072 & 0.960 & 0.077 & 0.970 & 0.068 & 0.072 & 0.959 & 0.077 & 0.969 \\
$q=0.75$ & 500 & 0.087 & 0.085 & 0.942 & 0.096 & 0.973 & 0.088 & 0.084 & 0.937 & 0.095 & 0.968 \\
\bottomrule
\end{tabular}}
\end{table}

The point estimates confirm the theory: $\widehat\beta_{EW}$ and
$\widehat\beta_{MD}$ are unbiased at all three quantiles despite the
misspecified first step (largest absolute mean bias $0.003$). The
inference comparison is stark. The full sandwich tracks the Monte
Carlo standard deviation closely---average standard errors are within
$2\%$ of the empirical standard deviations at the median and within
$5\%$ in the least favorable outer-quantile configuration with
$n=250$---and delivers coverage between $0.936$ and $0.959$ in all
eighteen configurations. Both shortcut variance estimators fail in the
same way: the sampling errors of the period-$s$ and period-$t$
projection coefficients are strongly positively correlated because they
are computed from the same individuals, and the variance of the contrast
$\widehat b_{st}$ subtracts twice that covariance. Zeroing the
cross-period blocks discards the subtraction and overstates the variance
by $50$--$70\%$ in this design, so coverage rises to
$0.993$--$0.999$. Confidence intervals built from conventional
quantile-regression output are thus badly miscalibrated
even when they err on the conservative side; nothing guarantees the
direction of the distortion in general, since it is governed by the sign
pattern of the neglected blocks. The overidentification test retains
approximately correct size under misspecified projections for $T=2$
(rejection rates $0.040$--$0.059$), with the same small-$n$
overrejection at $T=4$ noted before ($0.11$--$0.19$ at $n=250$, falling
to $0.05$--$0.08$ at $n=1000$). Panel B shows that the pairs cluster
bootstrap, with the minimum-distance weight held fixed at the
full-sample estimate across draws, is a serviceable alternative to the
analytic sandwich, with a mild conservative tilt at this sample size:
across the three quantiles the mean bootstrap standard error exceeds the
Monte Carlo standard deviation by about $9$--$13\%$ (for example $0.082$
against an empirical $0.076$ at $q=0.25$), so bootstrap coverage lands
at $0.965$--$0.973$ where the analytic sandwich delivers
$0.936$--$0.960$. This slight upward bias of the nonparametric
bootstrap for quantile-regression variability in moderate samples is
well documented; the analytic robust intervals are the closer to
nominal here, and the bootstrap is best viewed as the conservative
fallback when programming the stacked sandwich is impractical.

Bandwidth sensitivity for the kernel-estimated Jacobian is examined in
Table~\ref{mc:tab:bw}, which multiplies the rule-of-thumb $h$ by $0.5$,
$1$, and $2$ in one Design 1 and one Design 3 configuration.
Coverage of the equally weighted intervals moves only from $0.938$ to
$0.975$ (Design 1 configuration) and $0.941$ to $0.971$ (Design 3
configuration) as the bandwidth varies over a fourfold range, with the
rule-of-thumb value closest to $0.95$ in both cases. The
overidentification statistic is more sensitive in the direction one
would expect: halving the bandwidth makes the kernel Jacobian noisier
and inflates the rejection rate of a true null to $0.14$ in the Design 1
configuration ($T=4$), while doubling it is conservative ($0.02$);
in the $T=2$ Design 3 configuration the rejection rate stays within
$0.050$--$0.069$ throughout. The rule-of-thumb bandwidth is thus
adequate for estimation and intervals, and for testing purposes erring
toward larger rather than smaller bandwidths is the safer deviation.

\subsection{Design 4: heavy tails and the comparison with least
squares}\label{mc:d4}

The fourth design quantifies the efficiency trade-off between the
quantile-based procedure and least-squares fixed effects. The DGP is the
Design 1 baseline with $\rho_A=0.75$, $\rho_V=0.5$, with four error
distributions for the innovations driving $V_{it}$: (i) the Gaussian
AR(1); (ii) a contaminated version in which, with probability $0.05$, an
individual's entire error path is scaled by $8$; (iii) i.i.d.\
Student-$t_3$ errors scaled to unit variance; and (iv) i.i.d.\
Student-$t_{1.5}$ errors, which have infinite variance, scaled to match
the Gaussian interquartile range. We compare the median contrast
estimators $\widehat\beta_{EW}(0.5)$ and $\widehat\beta_{MD}(0.5)$, the
composite minimum-distance estimator combining
$q\in\{0.25,0.5,0.75\}$, a least-squares analogue of the projection
procedure (period-specific least-squares projections on the history,
combined by minimum distance), and the within (fixed-effects) estimator
with cluster-robust standard errors. Table~\ref{mc:tab:d4} reports
RMSE, coverage, and the $0.95$ quantile of the absolute error, which
summarizes tail risk of the point estimate.

\begin{table}[t]
\centering
\caption{Design 4: error distributions and the comparison with least
squares ($\rho_A=0.75$, $\rho_V=0.5$, $q=0.5$; composite combines
$q\in\{0.25,0.5,0.75\}$; $2{,}000$ replications). MD(0.5) is the
median minimum-distance estimator, CMD the composite cross-quantile
estimator, LS-MD the least-squares projection analogue, and FE the
within estimator with cluster-robust standard errors.
$\mathrm{Q}_{.95}|err|$ is the $0.95$ quantile of
$|\widehat\beta-\beta_0|$.}
\label{mc:tab:d4}
\setlength{\tabcolsep}{3.2pt}
{\small
\begin{tabular}{lllcccccccccc}
\toprule
&&& \multicolumn{2}{c}{MD(0.5)} & \multicolumn{2}{c}{CMD}
& \multicolumn{2}{c}{LS-MD} & \multicolumn{2}{c}{FE}
& \multicolumn{2}{c}{$\mathrm{Q}_{.95}|err|$} \\
\cmidrule(lr){4-5}\cmidrule(lr){6-7}\cmidrule(lr){8-9}
\cmidrule(lr){10-11}\cmidrule(lr){12-13}
Errors & $T$ & $n$ & rmse & cov & rmse & cov & rmse & cov & rmse & cov
& MD(0.5) & FE \\
\midrule
Gaussian & 2 & 500 & 0.070 & 0.950 & 0.055 & 0.939 & 0.045 & 0.947 & 0.045 & 0.945 & 0.140 & 0.088 \\
 & 2 & 1000 & 0.050 & 0.949 & 0.038 & 0.947 & 0.032 & 0.947 & 0.032 & 0.949 & 0.097 & 0.062 \\
 & 4 & 500 & 0.043 & 0.945 & 0.034 & 0.909 & 0.027 & 0.940 & 0.029 & 0.952 & 0.084 & 0.057 \\
 & 4 & 1000 & 0.030 & 0.947 & 0.024 & 0.932 & 0.019 & 0.943 & 0.021 & 0.952 & 0.060 & 0.040 \\
\addlinespace
Contaminated & 2 & 500 & 0.075 & 0.944 & 0.058 & 0.946 & 0.086 & 0.953 & 0.088 & 0.954 & 0.149 & 0.174 \\
 & 2 & 1000 & 0.053 & 0.955 & 0.040 & 0.950 & 0.063 & 0.959 & 0.063 & 0.962 & 0.103 & 0.124 \\
 & 4 & 500 & 0.045 & 0.944 & 0.037 & 0.912 & 0.045 & 0.946 & 0.059 & 0.954 & 0.089 & 0.115 \\
 & 4 & 1000 & 0.031 & 0.951 & 0.025 & 0.933 & 0.035 & 0.936 & 0.042 & 0.952 & 0.061 & 0.081 \\
\addlinespace
$t_3$ & 2 & 500 & 0.065 & 0.955 & 0.054 & 0.943 & 0.061 & 0.946 & 0.063 & 0.947 & 0.126 & 0.124 \\
 & 4 & 500 & 0.038 & 0.944 & 0.032 & 0.919 & 0.034 & 0.932 & 0.036 & 0.944 & 0.073 & 0.072 \\
\addlinespace
$t_{1.5}$ & 2 & 500 & 0.090 & 0.963 & 0.084 & 0.944 & 0.386 & 0.959 & 2.018 & 0.964 & 0.178 & 0.878 \\
 & 4 & 500 & 0.052 & 0.954 & 0.049 & 0.927 & 0.109 & 0.947 & 0.428 & 0.960 & 0.102 & 0.606 \\
\addlinespace
\bottomrule
\end{tabular}}
\end{table}

Under Gaussian errors the ordering is as expected: least squares is
most precise, and the median-based estimator pays an efficiency cost of
just over $50\%$ in RMSE ($0.070$ versus $0.045$ at $T=2$, $n=500$),
which the composite estimator cuts to about $20\%$ ($0.055$). Under
contamination the ranking reverses---the composite estimator's RMSE of
$0.058$ beats fixed effects' $0.088$ by a third---and under $t_3$ errors
the composite estimator is modestly better than fixed effects while the
median alone is comparable to it. The infinite-variance
$t_{1.5}$ case makes the robustness point dramatically: the
fixed-effects estimator has RMSE $2.02$ and a $0.95$ absolute-error
quantile of $0.88$ at $T=2$, $n=500$, while the median and composite
estimators keep RMSEs of $0.090$ and $0.084$ with tail quantiles below
$0.18$; the least-squares projection analogue fails similarly
($\mathrm{RMSE}=0.39$). Notably, cluster-robust fixed-effects intervals
still ``cover'' in the $t_{1.5}$ design, but only because the standard
errors explode along with the estimator; the interval width, not the
coverage rate, reveals the breakdown. Coverage for the quantile-based
procedures stays in the $0.94$--$0.96$ range throughout, with the
composite estimator's joint test and intervals showing the usual
degradation at $T=4$, $n=500$, where the stacked weight matrix has
dimension $36$ (composite coverage $0.91$--$0.93$; joint-statistic
rejection of a true null $0.16$--$0.19$, falling by half at
$n=1000$).

\subsection{Design 5: detecting violations of stationarity}\label{mc:d5}

\subsubsection*{5A. A location shift}

The first violation adds a period-two location shift tied to the
period-one regressor: $V_{i2}$ is replaced by $V_{i2}+\kappa X_{i1}$ in
the Design 1 baseline ($T=2$, $\rho_A=0.75$, $\rho_V=0.5$, $q=0.5$),
$\kappa\in\{0,0.1,0.25,0.5\}$. Under this violation the population
value of the $(1,2)$ contrast is $\beta_0-\kappa$ while the $(2,1)$
contrast still equals $\beta_0$, so the overidentifying restriction
fails and the minimum-distance estimand is pulled toward the average.
The simulations reproduce this arithmetic exactly---mean
$\widehat b_{12}=1-\kappa$ and mean $\widehat b_{21}=1.00$ to three
decimals at every $\kappa$ and $n$ (Table~\ref{mc:tab:d5a})---so
divergence between contrasts is informative about the failure, and the
$J_n(0.5)$ statistic detects it: Figure~\ref{mc:fig:d5a} plots rejection
frequencies against $\kappa$. Size at $\kappa=0$ is $0.044$--$0.054$;
power reaches $0.41$ at the modest violation $\kappa=0.1$ for $n=1000$,
reaches $0.85$ at $n=500$ and $0.99$ at $n=1000$ for $\kappa=0.25$, and
is essentially one at $\kappa=0.5$ for all $n$.

\begin{figure}[t]
\centering
\includegraphics[width=0.5\textwidth]{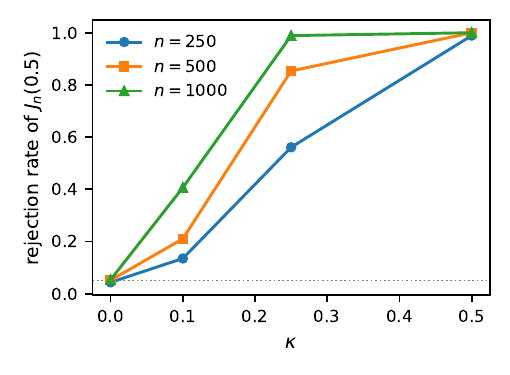}
\caption{Design 5A. Rejection frequencies of the projection-restriction
statistic $J_n(0.5)$ at the $5\%$ level (dotted line) against the
location-shift parameter $\kappa$, by sample size. $T=2$; $2{,}000$
replications per point.}
\label{mc:fig:d5a}
\end{figure}

\subsubsection*{5B. Quantile-varying slopes and the cross-quantile test}

The second violation is the one singled out by
Theorem~\ref{thm:incompatibility}: a random-coefficient design in which
every single-quantile restriction holds exactly while the constant-slope
model is false. With $T=2$, let $X_{it}=\exp\{0.5(C_i+Z_{it})\}$,
$A_i=0.75\,\overline X_i$, $U_{it}\sim\mathrm{Uniform}(0,1)$ i.i.d., and
\begin{equation}
Y_{it}=A_i+X_{it}\bigl\{\beta_0+\kappa\,(U_{it}-\tfrac12)\bigr\},
\qquad
\beta(q)=\beta_0+\kappa\,(q-\tfrac12).
\label{mc:eq:d5b}
\end{equation}
For each fixed $q$ the diagonal-minus-off-diagonal restrictions hold
exactly with value $\beta(q)$, so $J_n(q)$ should remain near size for
every $q$; only the cross-quantile restriction
$\beta(q_1)=\dots=\beta(q_L)$ is false when $\kappa\neq0$. We use
$q\in\{0.1,0.25,0.5,0.75,0.9\}$, $\kappa\in\{0,0.25,0.5,1\}$, and
$n\in\{500,1000\}$, and compare the per-quantile statistics $J_n(q)$
(one degree of freedom each) with the joint cross-quantile statistic
($rL-1=9$ degrees of freedom). Figure~\ref{mc:fig:d5b} contains the two
displays suggested by the design: estimated quantile-specific slopes
against the population line, and rejection frequencies.

\begin{figure}[t]
\centering
\includegraphics[width=\textwidth]{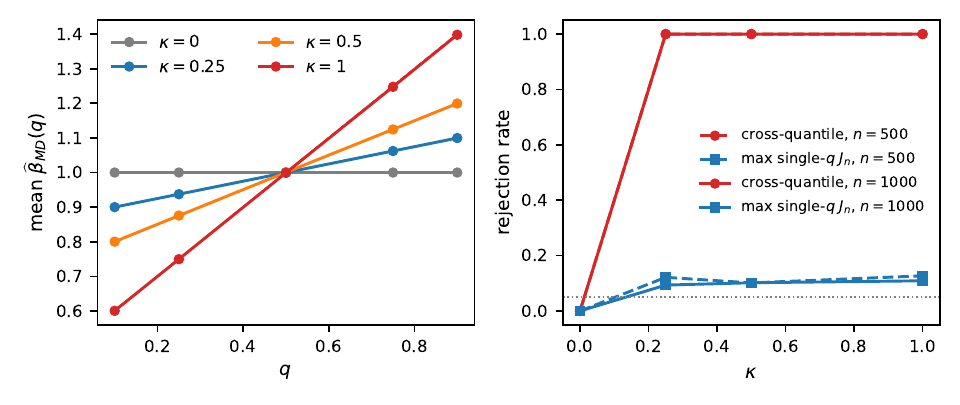}
\caption{Design 5B. Left: mean of $\widehat\beta_{MD}(q)$ across
replications (markers) and the population line
$\beta_0+\kappa(q-\tfrac12)$ (dashed) at $n=1000$. Right: rejection
frequencies of the joint cross-quantile statistic (red) and of the most
frequently rejecting single-quantile statistic $J_n(q)$ (blue) against
$\kappa$; solid lines $n=1000$, dashed $n=500$; the dotted line marks
$0.05$. At $\kappa=0$ the design is noiseless and all statistics are
degenerate at zero.}
\label{mc:fig:d5b}
\end{figure}

Both predictions are borne out with unusual precision. The estimated
quantile-specific slopes lie on the population line: at $n=1000$ every
mean $\widehat\beta_{MD}(q)$ is within $0.003$ of
$\beta_0+\kappa(q-\tfrac12)$, so the procedure applied quantile by
quantile consistently recovers the entire slope process. Each
single-quantile test stays near its nominal level at every
$\kappa$---rejection rates range over $0.06$--$0.13$, drifting toward
size as $n$ grows---because the restrictions it examines are true. The
joint cross-quantile statistic, in contrast, rejects with probability
one at every $\kappa\ge0.25$ and both sample sizes. This is the
practical content of the incompatibility theorem: single-quantile
statistics test the quantile-projection restrictions, not the equality
of slopes across quantiles, and a researcher who wants power against
quantile-varying coefficients must compare estimates across quantiles.
(At $\kappa=0$ the outcome is an exact linear function of the regressors
and the design is degenerate: all estimators equal $\beta_0$ and all
statistics equal zero identically, so the $\kappa=0$ column speaks to
neither size nor power.)

\subsubsection*{5C. Pure scale nonstationarity}

The last violation rescales the period-two error,
$V_{i2}=(1+\kappa|X_{i1}|)G_{i2}$ with $G_{it}$ i.i.d.\ standard normal
and the baseline correlated effect, $\kappa\in\{0,0.5,1\}$, $n=500$,
$T=2$. Table~\ref{mc:tab:d5c} reports rejection frequencies of
$J_n(q)$ at $q\in\{0.25,0.5,0.75\}$ and of the joint statistic across
the three quantiles. As anticipated, the median-based test has
essentially no power (rejection rate $0.11$ at $\kappa=1$), because a
symmetric rescaling leaves conditional medians unchanged. More
interesting is that the tail statistics also have little power
(about $0.10$ at $\kappa=1$; joint statistic $0.13$), and the design
explains why: the induced shift in the period-two quantile function is
proportional to $|X_{i1}|$, which is uncorrelated with $X_{i1}$ under
the symmetric regressor design, so the {\it linear} quantile projections
that the statistic examines are almost unaffected---the violation is
absorbed mainly by the intercept. This is a useful caution that
sharpens the interpretation given in Section 5B: the tests have power
against nonstationarity only insofar as it moves the linear
quantile-projection coefficients, and scale shifts orthogonal to the
included regressors can evade quantile-specific and cross-quantile tests
alike. Detection of such violations requires enriching the projection
set (for example with $|X_{i1}|$ itself), which the framework
accommodates directly.

\begin{table}[t]
\centering
\caption{Designs 5A and 5C. Left: location shift ($T=2$, $q=0.5$);
reported are the means across replications of the two contrasts and of
$\widehat\beta_{MD}$, whose population values under the violation are
$\beta_0-\kappa$, $\beta_0$, and (approximately) $\beta_0-\kappa/2$, and
the rejection rate of $J_n(0.5)$. Right: pure scale nonstationarity
($T=2$, $n=500$); rejection rates of $J_n(q)$ and of the joint
statistic across $q\in\{0.25,0.5,0.75\}$.}
\label{mc:tab:d5a}
\setlength{\tabcolsep}{4pt}
{\small
\begin{tabular}[t]{llcccc}
\multicolumn{6}{l}{\textit{Design 5A}}\\
\toprule
& $n$ & $\bar b_{12}$ & $\bar b_{21}$ & $\bar\beta_{MD}$ & $J$ rej. \\
\midrule
$\kappa=0$ & 250 & 1.005 & 1.004 & 1.005 & 0.043 \\
 & 500 & 0.999 & 1.000 & 0.999 & 0.051 \\
 & 1000 & 1.001 & 1.000 & 1.000 & 0.053 \\
\addlinespace
$\kappa=0.1$ & 250 & 0.901 & 1.001 & 0.951 & 0.135 \\
 & 500 & 0.903 & 1.000 & 0.951 & 0.210 \\
 & 1000 & 0.899 & 1.001 & 0.950 & 0.406 \\
\addlinespace
$\kappa=0.25$ & 250 & 0.754 & 1.004 & 0.880 & 0.561 \\
 & 500 & 0.749 & 0.998 & 0.873 & 0.854 \\
 & 1000 & 0.752 & 0.999 & 0.876 & 0.989 \\
\addlinespace
$\kappa=0.5$ & 250 & 0.497 & 1.001 & 0.748 & 0.989 \\
 & 500 & 0.498 & 0.997 & 0.748 & 1.000 \\
 & 1000 & 0.500 & 0.999 & 0.750 & 1.000 \\
\addlinespace
\bottomrule
\end{tabular}\hspace{1.5em}
\begin{tabular}[t]{lcccc}
\multicolumn{5}{l}{\textit{Design 5C}}\\
\toprule
$\kappa$ & $J(.25)$ & $J(.5)$ & $J(.75)$ & joint \\
\midrule
0 & 0.043 & 0.044 & 0.046 & 0.057 \\
0.5 & 0.063 & 0.064 & 0.065 & 0.070 \\
1 & 0.104 & 0.107 & 0.099 & 0.132 \\
\bottomrule
\end{tabular}}
\label{mc:tab:d5c}
\end{table}

\subsection{Toward longer panels}\label{mc:d6}

Although the theory is designed for short panels, users will ask how the
procedure behaves as $T$ grows and how it compares with large-$T$
two-step methods. We therefore run the Design 1 baseline
($\rho_A=0.75$, $\rho_V=0.5$, $n=500$, $q=0.5$) at $T=10$ and $T=20$,
where the number of contrasts is $r=90$ and $r=380$, and add the
two-step estimator of the large-$T$ literature that first estimates the
fixed effects from a within regression and then runs a pooled quantile
regression of $Y_{it}-\widehat A_i$ on $X_{it}$. Results are in
Table~\ref{mc:tab:d6}.

\begin{table}[t]
\centering
\caption{Longer panels ($\rho_A=0.75$, $\rho_V=0.5$, $n=500$, $q=0.5$,
$2{,}000$ replications). ``Canay'' is the large-$T$ two-step estimator
(within slope, estimated fixed effects removed, pooled median
regression); FE is the within estimator. $r=T(T-1)$ is the number of
contrasts.}
\label{mc:tab:d6}
\setlength{\tabcolsep}{4pt}
{\small
\begin{tabular}{lcccccccccc}
\toprule
& \multicolumn{3}{c}{EW} & \multicolumn{3}{c}{MD} & FE & Canay & \\
\cmidrule(lr){2-4}\cmidrule(lr){5-7}
$T$ & bias & rmse & cov & bias & rmse & cov & rmse & rmse & $J$ \\
\midrule
10 & -0.001 & 0.027 & 0.936 & -0.001 & 0.029 & 0.854 & 0.020 & 0.022 & 0.643 \\
20 & 0.000 & 0.019 & 0.939 & -0.000 & 0.041 & 0.259 & 0.014 & 0.016 & 1.000 \\
\bottomrule
\end{tabular}}
\end{table}

The comparison delineates the method's range of application. The
equally weighted estimator remains unbiased with reliable inference as
the panel lengthens---coverage is $0.936$--$0.939$ at both $T$---and its
precision continues to improve with $T$ (RMSE $0.027$ at $T=10$,
$0.019$ at $T=20$). The feasible minimum-distance estimator, in
contrast, breaks down exactly where the asymptotic approximation for the
estimated weight matrix must fail: with $r=90$ contrasts and $n=500$ its
coverage falls to $0.85$ and $J_n$ rejects a true null $64\%$ of the
time, and with $r=380$---a covariance matrix of nearly the same
dimension as the sample size---its Monte Carlo standard deviation is
double that of equal weighting, coverage collapses to $0.26$, and $J_n$
rejects always. Notably the $10^{12}$ condition-number safeguard is
never triggered even at $T=20$: the failure is statistical, not
numerical, and would not be flagged by a rank check. The within
estimator is the most precise in this Gaussian design, and the large-$T$
two-step estimator tracks it closely (RMSE $0.022$ and $0.016$),
outperforming equal weighting here because its incidental-parameter
bias is negligible under symmetric errors at these values of $T$; the
short-panel guarantees of the present method are, of course, exactly
what that estimator lacks at $T=2$ or $4$. The practical guidance is
clear: as $T$ grows, use the equally weighted estimator (or a
minimum-distance estimator on a reduced set of contrasts, such as
adjacent pairs) and reserve full efficient weighting and the
overidentification test for panels with $r=T(T-1)$ small relative to
$n$.

\subsection{Summary}

Across designs, four practical messages emerge. The contrast-averaging
estimators are unbiased and reliably covered by the robust intervals in
every correctly specified configuration, with the equally weighted
estimator the more dependable choice when the number of contrasts is
large relative to $n$ and the efficiency loss from equal weighting
bounded by single-digit percentages in all our Gaussian designs.
Conventional quantile-regression standard errors that ignore
cross-period dependence can be off by more than $50\%$ even when the
point estimator is consistent; the stacked cluster sandwich or the
(mildly conservative) pairs bootstrap should be regarded as integral to
the method. The efficiency
cost of the median-based procedure under Gaussian errors is roughly
offset by combining a few quantiles, and the protection purchased is
large: under infinite-variance errors the fixed-effects estimator's RMSE
is more than twenty times that of the composite estimator. Finally, the
specification statistics behave as the identification theory predicts:
they detect violations that move the linear quantile projections, the
cross-quantile statistic---and only the cross-quantile
statistic---detects quantile-varying slopes, and violations orthogonal
to the projection space require augmenting the projections to be
detectable.


\begin{table}[p]
\centering
\caption{Design 1, complete grid ($\sigma_X=1$, $q=0.5$, $2{,}000$
replications). ``vr'' columns are the ratios of Monte Carlo variances
$\mathrm{Var}(\widehat\beta_{MD})/\mathrm{Var}(\widehat\beta_{EW})$ and
$\mathrm{Var}(\widehat\beta_{OMD})/\mathrm{Var}(\widehat\beta_{EW})$.}
\label{mc:tab:d1full}
\setlength{\tabcolsep}{1.7pt}
{\scriptsize
\begin{tabular}{llccccccccccccccc}
\toprule
&& \multicolumn{3}{c}{SC} & \multicolumn{3}{c}{EW}
& \multicolumn{3}{c}{MD} & \multicolumn{2}{c}{OMD} && \\
\cmidrule(lr){3-5}\cmidrule(lr){6-8}\cmidrule(lr){9-11}
\cmidrule(lr){12-13}
Configuration & $n$ & bias & rmse & cov & bias & rmse & cov
& mbias & rmse & cov & rmse & cov & vr$_{MD}$ & vr$_{OMD}$ & $J$ \\
\midrule
$\rho_A=0$, $\rho_V=0$, $T=2$ & 250 & -0.005 & 0.151 & 0.957 & -0.002 & 0.132 & 0.950 & -0.002 & 0.132 & 0.944 & 0.132 & 0.945 & 1.002 & 1.000 & 0.061 \\
 & 500 & 0.000 & 0.102 & 0.960 & 0.001 & 0.090 & 0.960 & -0.001 & 0.090 & 0.958 & 0.090 & 0.955 & 1.003 & 1.000 & 0.045 \\
 & 1000 & -0.002 & 0.075 & 0.954 & -0.003 & 0.065 & 0.955 & -0.003 & 0.065 & 0.954 & 0.065 & 0.952 & 0.996 & 1.000 & 0.045 \\
\addlinespace
$\rho_A=0$, $\rho_V=0.7$, $T=2$ & 250 & -0.003 & 0.109 & 0.949 & -0.002 & 0.095 & 0.948 & 0.001 & 0.094 & 0.946 & 0.095 & 0.942 & 0.994 & 1.000 & 0.049 \\
 & 500 & 0.001 & 0.079 & 0.945 & 0.001 & 0.069 & 0.946 & -0.000 & 0.069 & 0.941 & 0.069 & 0.942 & 1.011 & 1.000 & 0.052 \\
 & 1000 & 0.001 & 0.055 & 0.944 & 0.001 & 0.047 & 0.956 & 0.001 & 0.047 & 0.953 & 0.047 & 0.950 & 1.008 & 1.000 & 0.050 \\
\addlinespace
$\rho_A=0$, $\rho_V=0$, $T=4$ & 250 & 0.004 & 0.162 & 0.957 & 0.003 & 0.074 & 0.952 & 0.002 & 0.078 & 0.926 & 0.074 & 0.951 & 1.106 & 1.000 & 0.087 \\
 & 500 & -0.001 & 0.114 & 0.958 & -0.000 & 0.052 & 0.955 & -0.001 & 0.053 & 0.943 & 0.052 & 0.949 & 1.046 & 1.000 & 0.068 \\
 & 1000 & -0.000 & 0.080 & 0.962 & 0.000 & 0.038 & 0.948 & 0.000 & 0.039 & 0.941 & 0.038 & 0.943 & 1.022 & 1.000 & 0.050 \\
\addlinespace
$\rho_A=0$, $\rho_V=0.7$, $T=4$ & 250 & -0.002 & 0.120 & 0.943 & -0.001 & 0.062 & 0.947 & -0.002 & 0.063 & 0.902 & 0.059 & 0.943 & 1.042 & 0.915 & 0.112 \\
 & 500 & -0.003 & 0.085 & 0.954 & -0.002 & 0.042 & 0.955 & -0.003 & 0.043 & 0.923 & 0.041 & 0.944 & 1.024 & 0.950 & 0.081 \\
 & 1000 & -0.001 & 0.059 & 0.955 & 0.001 & 0.029 & 0.957 & 0.000 & 0.029 & 0.943 & 0.029 & 0.949 & 0.984 & 0.940 & 0.057 \\
\addlinespace
$\rho_A=0.75$, $\rho_V=0$, $T=2$ & 250 & 0.004 & 0.143 & 0.951 & 0.002 & 0.125 & 0.950 & 0.000 & 0.125 & 0.946 & 0.125 & 0.941 & 1.001 & 1.000 & 0.038 \\
 & 500 & -0.001 & 0.098 & 0.957 & -0.002 & 0.084 & 0.963 & -0.002 & 0.085 & 0.960 & 0.084 & 0.954 & 1.003 & 1.000 & 0.052 \\
 & 1000 & 0.002 & 0.070 & 0.957 & 0.002 & 0.060 & 0.955 & 0.002 & 0.060 & 0.957 & 0.060 & 0.946 & 1.001 & 1.000 & 0.046 \\
\addlinespace
$\rho_A=0.75$, $\rho_V=0.7$, $T=2$ & 250 & 0.002 & 0.099 & 0.948 & 0.000 & 0.087 & 0.949 & -0.001 & 0.088 & 0.945 & 0.087 & 0.946 & 1.010 & 1.000 & 0.045 \\
 & 500 & 0.000 & 0.072 & 0.946 & -0.000 & 0.062 & 0.954 & 0.001 & 0.062 & 0.954 & 0.062 & 0.952 & 1.004 & 1.000 & 0.046 \\
 & 1000 & 0.000 & 0.051 & 0.957 & -0.000 & 0.044 & 0.958 & 0.000 & 0.044 & 0.955 & 0.044 & 0.953 & 1.011 & 1.000 & 0.050 \\
\addlinespace
$\rho_A=0.75$, $\rho_V=0$, $T=4$ & 250 & -0.004 & 0.158 & 0.938 & -0.001 & 0.070 & 0.951 & 0.001 & 0.073 & 0.930 & 0.070 & 0.949 & 1.089 & 1.000 & 0.088 \\
 & 500 & 0.003 & 0.106 & 0.957 & 0.001 & 0.050 & 0.949 & -0.000 & 0.050 & 0.944 & 0.050 & 0.949 & 1.009 & 1.000 & 0.060 \\
 & 1000 & -0.002 & 0.077 & 0.954 & -0.001 & 0.035 & 0.949 & -0.001 & 0.035 & 0.941 & 0.035 & 0.946 & 1.018 & 1.000 & 0.052 \\
\addlinespace
$\rho_A=0.75$, $\rho_V=0.7$, $T=4$ & 250 & -0.002 & 0.110 & 0.947 & 0.000 & 0.057 & 0.945 & 0.003 & 0.058 & 0.913 & 0.054 & 0.944 & 1.037 & 0.910 & 0.099 \\
 & 500 & -0.001 & 0.077 & 0.955 & 0.001 & 0.039 & 0.956 & 0.001 & 0.038 & 0.937 & 0.038 & 0.949 & 0.962 & 0.921 & 0.065 \\
 & 1000 & 0.000 & 0.055 & 0.952 & 0.000 & 0.027 & 0.957 & 0.001 & 0.026 & 0.953 & 0.026 & 0.950 & 0.933 & 0.924 & 0.061 \\
\addlinespace
\bottomrule
\end{tabular}}
\end{table}

\begin{table}[p]
\centering
\caption{Design 2, complete results ($\rho_A=0.75$, $\rho_V=0.5$,
$q=0.5$, $2{,}000$ replications). The parenthetical value after
$\sigma_X$ is $\mathrm{Var}(X_{it}-X_{is})$. cond$(\widehat J)$ and
cond$(\widehat\Omega)$ are medians across replications of the condition
numbers of the estimated Jacobian and contrast covariance matrix;
``adj.'' is the fraction of replications in which the $10^{12}$
pseudoinverse threshold was triggered.}
\label{mc:tab:d2}
\setlength{\tabcolsep}{2.8pt}
{\footnotesize
\begin{tabular}{llcccccccccc}
\toprule
Configuration & $n$ & \multicolumn{1}{c}{EW} & \multicolumn{1}{c}{MD}
& \multicolumn{1}{c}{OMD} & MD & $J$ & vr$_{MD}$ & vr$_{OMD}$
& cond$(\widehat J)$ & cond$(\widehat\Omega)$ & adj. \\
&& rmse & rmse & rmse & cov & rej. &&&&& \\
\midrule
$T=2$, $\sigma_X=1$ (1.000) & 500 & 0.071 & 0.072 & 0.071 & 0.956 & 0.049 & 1.005 & 1.000 & 3.1 & 3.1 & 0.000 \\
 & 1000 & 0.052 & 0.052 & 0.052 & 0.950 & 0.040 & 1.005 & 1.000 & 3.0 & 3.0 & 0.000 \\
\addlinespace
$T=2$, $\sigma_X=0.5$ (0.400) & 500 & 0.113 & 0.114 & 0.113 & 0.947 & 0.046 & 1.012 & 1.000 & 9.1 & 9.1 & 0.000 \\
 & 1000 & 0.077 & 0.077 & 0.077 & 0.954 & 0.051 & 1.005 & 1.000 & 9.1 & 9.0 & 0.000 \\
\addlinespace
$T=2$, $\sigma_X=0.25$ (0.118) & 500 & 0.202 & 0.203 & 0.202 & 0.957 & 0.051 & 1.009 & 1.000 & 33.3 & 33.4 & 0.000 \\
 & 1000 & 0.143 & 0.143 & 0.143 & 0.953 & 0.043 & 1.003 & 1.000 & 33.1 & 33.1 & 0.000 \\
\addlinespace
$T=4$, $\sigma_X=1$ (1.000) & 500 & 0.043 & 0.043 & 0.042 & 0.951 & 0.066 & 0.997 & 0.953 & 5.7 & 25.9 & 0.000 \\
 & 1000 & 0.030 & 0.030 & 0.030 & 0.955 & 0.053 & 0.974 & 0.947 & 5.5 & 23.6 & 0.000 \\
\addlinespace
$T=4$, $\sigma_X=0.5$ (0.400) & 500 & 0.069 & 0.069 & 0.067 & 0.935 & 0.068 & 0.996 & 0.948 & 19.3 & 83.8 & 0.000 \\
 & 1000 & 0.049 & 0.048 & 0.048 & 0.944 & 0.055 & 0.983 & 0.957 & 18.6 & 77.7 & 0.000 \\
\addlinespace
$T=4$, $\sigma_X=0.25$ (0.118) & 500 & 0.126 & 0.126 & 0.123 & 0.944 & 0.071 & 1.003 & 0.956 & 73.7 & 316.5 & 0.000 \\
 & 1000 & 0.091 & 0.090 & 0.089 & 0.940 & 0.057 & 0.974 & 0.948 & 71.1 & 294.9 & 0.000 \\
\addlinespace
$T=6$, $\sigma_X=1$ (1.000) & 500 & 0.035 & 0.036 & 0.034 & 0.921 & 0.129 & 1.028 & 0.917 & 8.5 & 78.3 & 0.000 \\
 & 1000 & 0.024 & 0.024 & 0.023 & 0.942 & 0.070 & 0.969 & 0.914 & 8.0 & 66.4 & 0.000 \\
\addlinespace
$T=6$, $\sigma_X=0.5$ (0.400) & 500 & 0.055 & 0.055 & 0.053 & 0.927 & 0.120 & 1.000 & 0.915 & 30.0 & 270.3 & 0.000 \\
 & 1000 & 0.038 & 0.038 & 0.037 & 0.938 & 0.086 & 0.999 & 0.933 & 28.5 & 230.6 & 0.000 \\
\addlinespace
$T=6$, $\sigma_X=0.25$ (0.118) & 500 & 0.102 & 0.103 & 0.098 & 0.927 & 0.115 & 1.030 & 0.932 & 116.6 & $1.0\times 10^{3}$ & 0.000 \\
 & 1000 & 0.071 & 0.070 & 0.068 & 0.940 & 0.067 & 1.000 & 0.931 & 110.4 & 890.4 & 0.000 \\
\addlinespace
\bottomrule
\end{tabular}}
\end{table}

\begin{table}[p]
\centering
\caption{Bandwidth sensitivity of the kernel-estimated Jacobian
($2{,}000$ replications). Entries are coverage of nominal $95\%$
intervals (EW/MD) and rejection rates of $J_n(q)$ when the
rule-of-thumb bandwidth $h$ is multiplied by $0.5$, $1$, and $2$.}
\label{mc:tab:bw}
\setlength{\tabcolsep}{4pt}
{\footnotesize
\begin{tabular}{lcccccc}
\toprule
& \multicolumn{3}{c}{coverage EW/MD} & \multicolumn{3}{c}{$J$ rej.} \\
\cmidrule(lr){2-4}\cmidrule(lr){5-7}
Configuration & $0.5h$ & $h$ & $2h$ & $0.5h$ & $h$ & $2h$ \\
\midrule
Design 1 ($\rho_A=0.75$, $\rho_V=0.7$, $T=4$, $q=0.5$) & 0.938/0.914 & 0.950/0.936 & 0.975/0.960 & 0.139 & 0.075 & 0.018 \\
Design 3 ($T=2$, $q=0.25$) & 0.941/0.934 & 0.956/0.952 & 0.971/0.968 & 0.069 & 0.062 & 0.050 \\
\bottomrule
\end{tabular}}
\end{table}

\section{Conclusion}
\label{sec:conclusion}

Conditional stationarity is often viewed as a relatively weak alternative to parametric assumptions on panel disturbances. In a quantile model it has a sharp implication: if quantile-specific residual distributions are stationary across periods and within-individual regressor changes have ordinary rank, the slope coefficient cannot vary with the quantile index. Stationarity therefore selects a common location effect rather than a family of quantile-specific slopes.

For the stationary location model, the same restriction provides a simple fixed-$T$ identification strategy. Period-specific cross-sectional quantile projections contain a common nuisance component, while the structural slope enters a different regressor block in each period. Diagonal-minus-off-diagonal coefficient contrasts eliminate the nuisance projection and identify the common slope for any $T\geq 2$. Pooling all contrasts by minimum distance yields a computationally simple estimator with standard $\sqrt n$ inference, arbitrary within-individual dependence, and no estimation of individual effects.

The approach also supplies specification diagnostics. Equality of the contrast estimates is testable even with two periods, and the restrictions can be strengthened by combining several quantile projections. These features make the method useful both as an estimator under the stationary location model and as a disciplined way to assess whether stationarity is compatible with the distributional heterogeneity in a given panel.

\appendix

\section{Proofs}
\label{app:proofs}

\subsection{Proof of Theorem \ref{thm:incompatibility} and Corollary \ref{cor:generic-failure}}

\begin{proof}[Proof of Theorem \ref{thm:incompatibility}]
Fix $t\neq s$ and $\tau,u\in\mathcal T$. Translation equivariance of generalized inverse quantiles and \eqref{eq:linear-quantile-model} give
\begin{align}
Q_u\{\varepsilon_{it}(\tau)\mid X_i,A_i\}
&=Q_u(Y_{it}\mid X_i,A_i)
  -c(\tau)-X_{it}'\beta(\tau)-\rho(\tau)A_i\\
&=c(u)-c(\tau)+X_{it}'\{\beta(u)-\beta(\tau)\}
  +\{\rho(u)-\rho(\tau)\}A_i.
\label{eq:proof-q-t}
\end{align}
Similarly,
\begin{equation}
Q_u\{\varepsilon_{is}(\tau)\mid X_i,A_i\}
 =c(u)-c(\tau)+X_{is}'\{\beta(u)-\beta(\tau)\}
  +\{\rho(u)-\rho(\tau)\}A_i.
\label{eq:proof-q-s}
\end{equation}
If the two conditional residual distributions are equal at a conditioning value, their generalized inverse quantile functions are equal. Subtracting \eqref{eq:proof-q-s} from \eqref{eq:proof-q-t} yields
\[
\Delta X_{its}'\{\beta(u)-\beta(\tau)\}=0.
\]
This proves part (i).

For part (ii), residual stationarity at $\tau$ makes the last equality hold almost surely for every $u\in\mathcal T$. Let $\gamma_u=\beta(u)-\beta(\tau)$. Then
\[
0=\E[(\Delta X_{its}'\gamma_u)^2]
 =\gamma_u'\E[\Delta X_{its}\Delta X_{its}']\gamma_u.
\]
Positive definiteness in Assumption \ref{ass:within-rank} implies $\gamma_u=0$. Since $u$ was arbitrary, $\beta(\cdot)$ is constant on $\mathcal T$.
\end{proof}

\begin{proof}[Proof of Corollary \ref{cor:generic-failure}]
Fix $\tau$. If $\beta(\cdot)$ is not constant, there exists $u_\tau\in\mathcal T$ such that
$\gamma_\tau=\beta(u_\tau)-\beta(\tau)\neq 0$. By part (i) of Theorem \ref{thm:incompatibility}, the event $\{(X_i,A_i)\in\mathcal S_{ts}(\tau)\}$ is contained in
\[
\{\Delta X_{its}'\gamma_\tau=0\}.
\]
The probability of this hyperplane event is zero by \eqref{eq:no-hyperplane}. The result follows.
\end{proof}

\subsection{Proofs for identification}

\begin{proof}[Proof of Lemma \ref{lem:composite-stationarity}]
For any real $r$ and conditioning value $X_i=x$,
\begin{align*}
\Pp(R_{it}\leq r\mid X_i=x)
&=\int \Pp(V_{it}\leq r-a\mid X_i=x,A_i=a)
 \,dF_{A\mid X}(a\mid x).
\end{align*}
Assumption \ref{ass:stationarity} makes the integrand identical for every $t$. The integral, and therefore the conditional distribution of $R_{it}$ given $X_i$, is common across periods.
\end{proof}

\begin{proof}[Proof of Proposition \ref{prop:projection-representation}]
By \eqref{eq:location-model} and \eqref{eq:selector},
\[
Y_{it}=X_i'E_t\beta_0+R_{it}.
\]
For any $\pi\in\R^{pT}$, write $\delta=\pi-E_t\beta_0$. Then
\begin{align*}
\E[\rho_q(Y_{it}-X_i'\pi)]
&=\E[\rho_q\{R_{it}-X_i'(\pi-E_t\beta_0)\}]\\
&=\E[\rho_q(R_{it}-X_i'\delta)].
\end{align*}
By Lemma \ref{lem:composite-stationarity}, this last objective is the same for every $t$, and by Assumption \ref{ass:unique-projection} it has the unique minimizer $\delta_0(q)$. Hence the unique minimizer in the original parameterization is
$\pi_{0t}(q)=E_t\beta_0+\delta_0(q)$. Premultiplying by $E_s'$ yields \eqref{eq:block-representation}.
\end{proof}

\begin{proof}[Proof of Theorem \ref{thm:identification}]
For $s\neq t$, Proposition \ref{prop:projection-representation} gives
\begin{align*}
E_s'\{\pi_{0s}(q)-\pi_{0t}(q)\}
&=E_s'(E_s-E_t)\beta_0\\
&=\beta_0,
\end{align*}
because $E_s'E_s=I_p$ and $E_s'E_t=0$. Thus any one contrast point identifies $\beta_0$ when $T\geq2$. When $T=1$, \eqref{eq:projection-representation} reduces to
$\pi_{01}(q)=\delta_0(q)+\beta_0$, which does not separate the two terms.
\end{proof}

\subsection{Matrix rank and equivalence}

\begin{lemma}[Rank properties]
\label{lem:rank-properties}
For $T\geq2$, $M=[G\ H]$ in \eqref{eq:G-H} has full column rank $p+pT$, $C$ has full row rank $pT(T-1)$, and
\begin{equation}
CH=0,
\qquad
CG=D.
\label{eq:annihilation-identities}
\end{equation}
Moreover, $\ker(C)=\operatorname{col}(H)$.
\end{lemma}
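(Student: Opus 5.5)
The plan is to verify everything by direct block computation using the Kronecker structure. I will work with the facts $E_s'E_t=\1(s=t)I_p$ and $\sum_t E_tE_t'=I_{pT}$, and I will index the stacked vector $\Pi\in\R^{pT^2}$ as $T$ blocks $\pi_1,\ldots,\pi_T$, each in $\R^{pT}$.

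\textbf{The identities $CH=0$ and $CG=D$.} For the block row $C_{st}=(e_s-e_t)'\otimes E_s'$ acting on $\Pi$, we get $C_{st}\Pi=E_s'(\pi_s-\pi_t)$.

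1. Applying this to $H\delta=\1_T\otimes\delta$, every period block equals $\delta$. Hence $C_{st}H\delta=E_s'(\delta-\delta)=0$.
2. Applying it to $G\beta$, whose $t$-th block is $E_t\beta$, gives $C_{st}G\beta=E_s'(E_s-E_t)\beta=\beta$. This is exactly the calculation in the proof of Theorem \ref{thm:identification}.
3. Stacking over ordered pairs, $CG=\1_r\otimes I_p=D$ and $CH=0$.

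\textbf{Full row rank of $C$.} I will show that $C'$ is injective. Take $w=(w_{st})_{s\neq t}$ with $w_{st}\in\R^p$. The $u$-th period block of $C'w$ is
\[
\sum_{t\neq u}E_uw_{ut}-\sum_{s\neq u}E_sw_{su}.
\]
Projecting this block onto regressor-block $v$ with $E_v'$ gives two cases:
\begin{itemize}
\item for $v=u$, the result is $\sum_{t\neq u}w_{ut}$;
\item for $v\neq u$, the result is $-w_{vu}$.
\end{itemize}
So $C'w=0$ forces $w_{vu}=0$ for every $v\neq u$, which means $w=0$. Hence $\operatorname{rank}C=pr=pT(T-1)$.

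\textbf{The kernel of $C$.} This is the step needing the most care. Suppose $C\Pi=0$, so $E_s'\pi_s=E_s'\pi_t$ for all $s\neq t$. Define $\delta\in\R^{pT}$ blockwise by $E_s'\delta=E_s'\pi_s$. Then for every $t$ and every block $s$, including $s=t$, we have $E_s'\pi_t=E_s'\delta$. Hence $\pi_t=\delta$ for all $t$, and so $\Pi=H\delta\in\operatorname{col}(H)$. The reverse inclusion is $CH=0$. As a check, $\dim\ker C=pT^2-pT(T-1)=pT$, which matches $\operatorname{rank}H=pT$ because $H$ clearly has full column rank ($H\delta=0$ forces $\delta=0$).

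\textbf{Full column rank of $M$.} Suppose $G\beta+H\delta=0$. Apply $C$ to get $D\beta=0$, so $\beta=0$ since $D$ has full column rank. Then $H\delta=0$ gives $\delta=0$. Therefore $M$ has rank $p+pT$, using only $T\geq2$ so that $r\geq1$ and $D$ is nonempty. The only real obstacle is keeping the double indexing (period block versus regressor block) straight in the computation of $C'$ and of the kernel; everything else is immediate.
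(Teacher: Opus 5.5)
Your proof is correct. It uses the same elementary block algebra as the paper, but the logic runs in a different order.

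\textbf{How the paper argues.}
\begin{itemize}
\item It proves full column rank of $M$ directly from the block equations $E_t\gamma+\eta=0$ for every $t$.
\item It argues the row rank of $C$ verbally: for each regressor block $s$ the $T-1$ contrasts are independent differences, and distinct $s$ use disjoint coordinates.
\item It obtains $\ker(C)=\operatorname{col}(H)$ from the inclusion $CH=0$ plus the count $pT^2-pT(T-1)=pT$. Its kernel claim therefore rests on the rank claim.
\end{itemize}

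\textbf{How you argue.} You make $CH=0$ and $CG=D$ the workhorse.
\begin{itemize}
\item Applying $C$ to $G\beta+H\delta=0$ reduces the rank of $M$ to injectivity of $D$. This mirrors how $C$ concentrates out $\delta$ in the proof of Proposition \ref{prop:equivalence}.
\item You characterize the kernel directly: $C\Pi=0$ if and only if all period-specific projection vectors coincide. This is more informative than a dimension match, and it is exactly the ``common nuisance'' restriction the contrasts encode.
\end{itemize}

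\textbf{What each buys.} The paper's version is shorter, and its argument for $M$ does not need $C$ at all. Yours is fully explicit: the $C'w$ computation makes precise the independence the paper asserts in words.

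One economy you missed: once you have shown $\ker C=\operatorname{col}(H)$ directly and that $H$ has full column rank, rank--nullity already gives $\operatorname{rank}C=pT^2-pT=pT(T-1)$. The separate injectivity argument for $C'$ is therefore redundant. It is harmless, but it only serves as a check.
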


\begin{proof}
Suppose $G\gamma+H\eta=0$. Looking at the period-$t$ block gives
$E_t\gamma+\eta=0$ for every $t$. Hence $(E_t-E_s)\gamma=0$ for every pair. Since $E_t\gamma$ and $E_s\gamma$ place the same vector in disjoint blocks, this is possible only if $\gamma=0$, after which $\eta=0$. Thus $M$ has full column rank.

For a fixed regressor block $s$, the $T-1$ contrasts comparing equation $s$ to equations $t\neq s$ are linearly independent differences among the $T$ equation-specific coefficients on that block. Different values of $s$ involve disjoint coordinates. Therefore $C$ has rank $pT(T-1)$. The identities in \eqref{eq:annihilation-identities} follow directly from the definitions. Since $H$ has rank $pT$ and $CH=0$, $\operatorname{col}(H)\subseteq\ker(C)$. The two spaces have the same dimension because
\[
\dim\ker(C)=pT^2-pT(T-1)=pT.
\]
They are therefore equal.
\end{proof}

\begin{proof}[Proof of Proposition \ref{prop:equivalence}]
Write $\Sigma=\Sigma_{\Pi,q}$. Since $C$ has full row rank and $\ker(C)=\operatorname{col}(H)$, the generalized residual-maker identity gives
\begin{equation}
C'(C\Sigma C')^{-1}C
 =\Sigma^{-1}
 -\Sigma^{-1}H(H'\Sigma^{-1}H)^{-1}H'\Sigma^{-1}.
\label{eq:residual-maker-identity}
\end{equation}
The right-hand side is the $\Sigma^{-1}$-weighted annihilator of the nuisance space spanned by $H$. Concentrating $\delta$ out of the full GLS criterion in \eqref{eq:full-gls} therefore gives the criterion
\[
(\widehat\Pi-G\beta)'
 C'(C\Sigma C')^{-1}C
(\widehat\Pi-G\beta).
\]
Using $CG=D$ and $C\widehat\Pi=\widehat b$, this is exactly
\[
(\widehat b-D\beta)'\Omega_q^{-1}
(\widehat b-D\beta),
\qquad \Omega_q=C\Sigma C'.
\]
The minimizers in $\beta$ coincide. Replacing population covariance matrices by consistent estimates changes the resulting linear maps by $o_p(1)$, and their product with the $O_p(n^{-1/2})$ first-step error changes the estimators by $o_p(n^{-1/2})$.
\end{proof}

\subsection{Asymptotic theory}

\begin{proof}[Proof of Proposition \ref{prop:consistency}]
For each $t$, the sample objective in \eqref{eq:sample-first-stage} is a finite-dimensional convex M-estimation criterion. The check loss is Lipschitz, and Assumption \ref{ass:sampling} supplies an integrable envelope on compact parameter sets, so a uniform law of large numbers applies. Assumption \ref{ass:unique-projection} gives a unique population minimizer. Moreover, $\rho_q(u)\geq\min\{q,1-q\}|u|$, and positive definiteness of $\E[X_iX_i']$ implies
\[
\inf_{\|a\|=1}\E|X_i'a|>0.
\]
The population objective is therefore coercive, so no compact parameter-space restriction is needed. The standard convex argmin theorem yields $\widehat\pi_t(q)\toP\pi_{0t}(q)$ for every $t$. Since $T$ is fixed, stacking and premultiplying by $C$ give \eqref{eq:first-step-consistency}. Finally, \eqref{eq:generic-md-closed}, $b_0=D\beta_0$, $A_n\toP A$, and the full column rank of $D$ imply \eqref{eq:md-consistency} by continuous mapping.
\end{proof}

\begin{proof}[Proof of Theorem \ref{thm:first-stage-asymptotic}]
For each fixed $t$, $\widehat\pi_t(q)$ is an ordinary cross-sectional quantile-regression estimator with independent observations across $i$. The objective may be misspecified as a model of the conditional quantile, but it is correctly centered at the unique population check-loss minimizer $\pi_{0t}(q)$. Under Assumption \ref{ass:qr-regularity}, the standard convexity argument based on Knight's identity yields
\begin{equation}
\sqrt n\{\widehat\pi_t(q)-\pi_{0t}(q)\}
 =J_q^{-1}\frac{1}{\sqrt n}\sum_{i=1}^n
 X_i\psi_q\{U_{it}(q)\}+o_p(1).
\label{eq:single-bahadur}
\end{equation}
The same $J_q$ appears for every $t$ because $U_{it}(q)\mid X_i$ is stationary by Lemma \ref{lem:composite-stationarity} and Proposition \ref{prop:projection-representation}. Since $T$ is fixed, the $T$ representations can be stacked, giving \eqref{eq:bahadur-stacked}. The multivariate central limit theorem applied to the iid cluster scores $g_i(q)$ gives \eqref{eq:SigmaPi}. Premultiplication by the fixed matrix $C$ and use of $C\Pi_0=D\beta_0$ yield \eqref{eq:Omegaq}.
\end{proof}

\begin{proof}[Proof of Corollary \ref{cor:md-asymptotic}]
From \eqref{eq:generic-md-closed} and $b_0=D\beta_0$,
\begin{align*}
\sqrt n\{\widehat\beta_A(q)-\beta_0\}
&=(D'A_nD)^{-1}D'A_n
 \sqrt n\{\widehat b(q)-D\beta_0\}.
\end{align*}
Slutsky's theorem and \eqref{eq:Omegaq} give \eqref{eq:generic-md-limit}--\eqref{eq:generic-md-var}. Substituting $A=\Omega_q^{-1}$ simplifies the variance to \eqref{eq:optimal-md-var}.
\end{proof}

\begin{proof}[Proof of Proposition \ref{prop:covariance-consistency}]
The kernel law of large numbers and continuity of $f_{U(q)\mid X}$ at zero imply that the infeasible version of \eqref{eq:Jhat-period}, formed with $U_{it}(q)$, converges to $J_q$. The generated-residual error is negligible under $\sqrt n h_n^2\to\infty$, the Lipschitz property of $K$, and the $O_p(n^{-1/2})$ rate of $\widehat\pi_t(q)$. Averaging over the fixed number of periods therefore gives $\widehat J_q\toP J_q$.

Because the conditional distribution of $U_{it}(q)$ is continuous at zero, replacing $U_{it}(q)$ by $\widehat U_{it}(q)$ in the indicator score affects a vanishing fraction of observations in probability. The moment condition in Assumption \ref{ass:kernel} and a cluster law of large numbers then yield $\widehat S_q\toP S_q$. Positive definiteness and continuous mapping imply the remaining statements in \eqref{eq:all-cov-consistency} and \eqref{eq:V-consistency}.
\end{proof}

\begin{proof}[Proof of Proposition \ref{prop:latent-covariance}]
Conditional stationarity and finite second moments imply that $\E(V_{it}\mid X_i,A_i)$ is the same function of $(X_i,A_i)$ for every $t$. Hence $\Cov(A_i,V_{it})=c_{AV}$ does not depend on $t$. Expanding the covariance of $R_i=A_i\1_T+V_i$ gives
\[
\Gamma_R
 =\Gamma_V+\{\Var(A_i)+2c_{AV}\}\1_T\1_T'.
\]
This is \eqref{eq:covariance-decomposition} with $\lambda_0=\Var(A_i)+2c_{AV}$. Premultiplying and postmultiplying by any $L$ such that $L\1_T=0$ removes the rank-one component and yields \eqref{eq:difference-covariance}. Consistency of $L\widehat\Gamma_RL'$ follows from consistency of $\widehat\Gamma_R$ and continuous mapping.
\end{proof}

\begin{proof}[Proof of Proposition \ref{prop:J-test}]
Let $P_D^{\Omega}$ denote the $\Omega_q^{-1}$-weighted projection onto the column space of $D$. Standard minimum-distance algebra gives
\[
J_n(q)
 =Z_n'\Omega_q^{-1/2}
 (I-P_D^{\Omega})
 \Omega_q^{-1/2}Z_n+o_p(1),
\]
where $Z_n=\sqrt n\{\widehat b(q)-D\beta_0\}\toD N(0,\Omega_q)$. The idempotent matrix $I-P_D^{\Omega}$ has rank $pr-p=p\{T(T-1)-1\}$. The quadratic form therefore converges to the stated chi-square distribution. Consistency of $\widehat\Omega_q$ permits replacement of population by estimated weights.
\end{proof}

\section{Additional implementation details}
\label{app:implementation}

\subsection{Explicit equal-weight formula}

For each coordinate of $\beta_0$, the identity-weight full-system problem is an ordinary least-squares regression of the $T^2$ first-step coefficients on a regressor-block fixed effect and an indicator for the diagonal element. Concentrating out the block fixed effects gives
\begin{equation}
\widehat\beta_{\mathrm{EW}}(q)
 =\frac{1}{T(T-1)}
 \sum_{s=1}^T\sum_{t\neq s}
 \left\{\widehat\pi_s^{(s)}(q)-\widehat\pi_t^{(s)}(q)\right\},
\label{eq:appendix-equal-weight}
\end{equation}
which is \eqref{eq:equal-weight-estimator}. For $T=2$, this becomes
\begin{equation}
\widehat\beta_{\mathrm{EW}}(q)
 =\frac{1}{2}\left[
 \{\widehat\pi_1^{(1)}(q)-\widehat\pi_2^{(1)}(q)\}
 +\{\widehat\pi_2^{(2)}(q)-\widehat\pi_1^{(2)}(q)\}
 \right].
\label{eq:T2-average}
\end{equation}
Thus an estimator based on only one of the two blocks is consistent but is not generally the identity-weight minimum-distance estimator.

\subsection{Recommended reporting}

For empirical and Monte Carlo work, it is useful to report:
\begin{enumerate}[label=(\roman*)]
\item every individual contrast $\widehat b_{st}(q)$ and its standard error;
\item the equal-weight and efficient pooled estimates;
\item the condition numbers of $\widehat J_q$ and $\widehat\Omega_q$;
\item the analytic and pairs-bootstrap standard errors;
\item the $J$ statistic and its degrees of freedom; and
\item sensitivity to the quantile index $q$, the density bandwidth, and the use of pooled versus period-specific Jacobian estimates.
\end{enumerate}
These diagnostics distinguish weak within-panel design variation, instability in the density estimate, and failure of the stationarity-induced projection restrictions.

\end{document}